\titleformat{\section}{\normalfont\large\bfseries}{\thesection}{1em}{} 
\titleformat{\subsection}{\normalfont\normalsize\bfseries}{\thesubsection}{1em}{} 
\titleformat{\subsubsection}{\normalfont\small\bfseries}{\thesubsubsection}{1em}{} 
\patchcmd{\proof}{\indent}{}{}{}
\newcolumntype{Y}{>{\centering\arraybackslash}X}
\newcommand{\twodots}{.\kern-0.1em.}
\newcommand{\myldots}{\kern-0.05em.\kern-0.01em.\kern-0.01em.\kern0.01em}
\newcommand{\R}{\mathbb{R}}
\newcommand{\fe}{\mathsf{f}}
\newcommand{\ve}{\mathsf{v}}
\pgfplotsset{compat=newest}
\pgfplotsset{plot coordinates/math parser=false}
\def\munderbar#1{\underline{\sbox\tw@{$#1$}\dp\tw@\z@\box\tw@}}
\definecolor{olivegreen}{RGB}{128, 128, 0}
\newcommand\scalemath[2]{\scalebox{#1}{\mbox{\ensuremath{\displaystyle #2}}}}
\newtheorem{remark}{Remark}
\newtheorem{problem}{Problem}
\newtheorem{proposition}{Proposition}
\newtheorem{assumption}{Assumption}
\begin{document}
	\title{\textbf{Combined Learning of Linear Parameter-Varying Models and Robust Control Invariant Sets}}
	
	\author{Sampath Kumar Mulagaleti and Alberto Bemporad, \textit{Fellow, IEEE}
		\thanks{The authors are with the IMT School for Advanced Studies
			Lucca, Piazza San Francesco 19, 55100, Lucca, Italy (\url{s.mulagaleti,
				alberto.bemporad}@imtlucca.it). This work was supported by the European Research Council (ERC), Advanced Research Grant COMPACT (Grant Agreement No. 101141351).}
	}


	\newcounter{tempEquationCounter}
	\newcounter{thisEquationNumber}
	\newenvironment{floatEq}
	{\setcounter{thisEquationNumber}{\value{equation}}\addtocounter{equation}{1}
		\begin{figure*}[!t]
			\normalsize\setcounter{tempEquationCounter}{\value{equation}}
			\setcounter{equation}{\value{thisEquationNumber}}
		}
		{\setcounter{equation}{\value{tempEquationCounter}}
			\hrulefill\vspace*{4pt}
		\end{figure*}
		
	}
	\newenvironment{floatEq2}
	{\setcounter{thisEquationNumber}{\value{equation}}\addtocounter{equation}{1}
		\begin{figure*}[!t]
			\normalsize\setcounter{tempEquationCounter}{\value{equation}}
			\setcounter{equation}{\value{thisEquationNumber}}
		}
		{\setcounter{equation}{\value{tempEquationCounter}}
		\end{figure*}
	}

	\maketitle
	
	\begin{abstract}
		Dynamical models identified from data are frequently employed in control system design. However, decoupling system identification from controller synthesis can result in situations where no suitable controller exists after a model has been identified. In this work, we introduce a novel control-oriented regularization in the identification procedure to ensure the existence of a controller that can enforce constraints on system variables robustly. The combined identification algorithm includes: (i) the concurrent learning of an uncertain model and a nominal model using an observer; (ii) a regularization term on the model parameters defined as the size of the largest robust control invariant set for the uncertain model. To make the learning problem tractable, we consider nonlinear models in quasi Linear Parameter-Varying (qLPV) form, utilizing a novel scheduling function parameterization that facilitates the derivation of an associated uncertain linear model. The robust control invariant set is represented as a polytope, and we adopt novel results from polytope geometry to derive the regularization function as the value of an optimization problem. Additionally, we present new model-reduction approaches that exploit the chosen model structure. Numerical examples on classical identification benchmarks demonstrate the efficacy of our approach. A simple control scheme is also derived to provide an example of data-driven control of a constrained nonlinear system.
	\end{abstract}
	
	\thispagestyle{empty}
	\pagestyle{empty}
	
	\section{Introduction}
	As cyber-physical systems become increasingly prevalent, there is a growing need for algorithms that can safely and effectively control them. Model-based control design, which relies on dynamical models learned from data through system identification (SysID) techniques, has proven successful in addressing this need. These models predict future system behavior, and controllers synthesized based on them optimize the predictions to achieve desired control objectives when used in closed-loop with the system. A typical control objective is to perform reference tracking while satisfying input and output constraints.
	Typically, SysID and controller synthesis are performed sequentially: first, get a model from data, then design a controller for it. This approach, however, can result in controller parameterizations that are suboptimal or even infeasible for a given model parameterization~\cite{VanDenHof1995}. For instance, a nominal controller\,---\,designed without accounting for modeling errors\,---\,may lead to constraint violations when applied to the plant generating the data. Alternatively, a robust controller to explicitly consider model mismatch might be impossible to synthesize, i.e., there might not exist a controller of a given parameterization for the identified uncertain model. Consequently, there is a need for methodologies that integrate SysID and controller synthesis into a unified framework.
	
	Existing unifying approaches can broadly be divided into two categories: 
	
	$i$) Identification of a model and its associated uncertainty, followed by robust controller synthesis based on the identified uncertain model. Some related contributions are~\cite{kosut1992set,Milanese2004,sadraddini2018formal,Terzi2019,Lauricella2020}, etc., which focus on set-membership and/or robust identification;  
	
	$ii$) Concurrent uncertain system identification and robust controller synthesis, e.g.,~\cite{Shook1992,Chen2018,Mulagaleti2022}, etc., in which the model, uncertainty, and controller parameterizations are fixed, and parameters are optimized to minimize a control objective while enforcing model unfalsifiability. Recent reinforcement learning-based approaches, notably,~\cite{Zanon2021}, can also be interpreted in this framework, in which the parameters of a linear uncertain model and a robust Model Predictive Controller (MPC) are optimized. These approaches guarantee the existence of a controller (by directly identifying it) for the uncertain model, which under reasonable assumptions is a constraint-satisfying controller for the underlying plant. However, models identified using such methods might perform poorly when used for the synthesis of control schemes other than the one they are identified for. While incorporating prediction loss into the control objective can mitigate this issue, the methods are typically designed for input-state datasets rather than input-output datasets, requiring further modifications.
	
	Assuming there exists a method to characterize the uncertainty associated with a given model, the key step towards ensuring that one can design a model-based controller satisfying constraints involves guaranteeing the existence of a robust control invariant (RCI) set~\cite{Blanchini2008} for that model. RCI sets are regions of the state space in which the uncertain model can be regulated indefinitely within output constraints using feasible control inputs, and computational methods to construct them mostly rely on
	linear, possibly uncertain, models. Therefore, when dealing with nonlinear systems, it is reasonable to consider a linear uncertain model whose trajectories include those of the nonlinear model of the system; for this reason, Linear Parameter-Varying (LPV) model classes have been considered, and in particular quasi LPV (qLPV) models, in which the scheduling parameter is a nonlinear function
	of the state and input vectors. Data-driven methods for LPV system identification in the prediction error minimization paradigm have been developed under assumptions of availability of the scheduling function~\cite{Bamieh1999}. Avoiding this assumption, a large number of works have been dedicated to (implicitly) estimating the state sequence from input-output data. In~\cite{Masti2021}, an autoencoder was used for defining the hidden states of a nonlinear model, such as a qLPV model (see \cite[Sect.~4.2]{Masti2021}). In~\cite{Verhoek2022}, a joint identification procedure of the scheduling function and system matrices was presented, in which intermittent states were estimated by an encoder network~\cite{Beintema2021}. 
	
	\subsection{Contributions}
	We present a SysID framework to identify an uncertain model of the plant, in which we introduce a \textit{control-oriented} regularization that guarantees the existence of a constraint-satisfying RCI set for the model being identified. We parameterize a nonlinear model in qLPV form and the associated RCI set as a polytope.  We then formulate the control-oriented regularization as the value function of an {\color{black} optimization problem}, the goal of which is to {\color{black} compute an RCI set with minimal conservativeness.} Using every model such that the objective of the regularized SysID problem is finite, we can design a constraint-satisfying controller for the underlying plant.
	Our interpretation of control-oriented regularization is fundamentally different from that in~\cite{formentin2021control}, in which it refers to a function that minimizes deviation of closed-loop performance from a reference model.
	
	An important novel contribution of this paper is the way we parameterize the scheduling function of the qLPV model:
	we use a neural network with softmax output layer, so that the outputs of the network belong to the unit simplex by construction. This implies that the model matrices of the system being identified serve directly as vertices of a linear system with multiplicative uncertainty that bounds the nonlinear system. Then, an RCI set synthesized for the uncertain linear system is also RCI for the qLPV model, at the price of some possible conservativeness. To make the identification problem computationally tractable, we parameterize the RCI set as a polytope with fixed normal directions, and enforce configuration-constraints on the offsets~\cite{Villaneuva2024}. This enables us to express invariance conditions as being jointly linear in the parameters of the model and of the set, that we exploit to formulate an {\color{black} optimization problem} whose value serves as our control-oriented regularization.
	
	The paper is organized as follows. In Section~\ref{sec:definition}, we first state the learning problem we want to tackle. Then, we introduce the notions of uncertainty derived from model mismatch and set invariance, using which we formulate the conceptual SysID problem with control-oriented regularization. In Section~\ref{sec:parameterization}, we introduce the qLPV and RCI parameterizations, using which we formulate a computer-implementable version of the problem synthesized in Section~\ref{sec:definition}. 
	In Section~\ref{sec:sysID}, we present the concurrent-identification problem, along with schemes to identify the system orders and RCI set template, as well as an initial point for the problem. {\color{black}We also discuss the limitations of our approach in this section.}
	In Section~\ref{sec:numerical_examples}, we first present numerical examples to illustrate the effectiveness of the proposed qLPV parameterization with some benchmark examples. Furthermore, we test the model reduction schemes numerically. Finally, we solve the problem developed in Section~\ref{sec:parameterization} using data collected form a nonlinear mass-spring-damper system, and use this model and the associated optimal RCI set to synthesize an output tracking control scheme. 
	
	\subsection{Notation} 
	The symbols $\R$ and $\mathbb{N}$ denote the set of real and natural numbers, respectively. The set $\mathbb{I}_a^b:=\{a,\cdots,b\}$ is the set of indices between $a$ and $b$ with $a<b$. Given two sets $\mathcal{A},\mathcal{B} \subseteq \R^n$, their Minkowski sum is defined as $\mathcal{A}\oplus \mathcal{B}:=\{a+b:a\in \mathcal{A},b \in \mathcal{B}\}$. If $\mathcal{A}=\{a\}$ is singleton, with a slight abuse of notation we denote $\{a\}\oplus \mathcal{B}$ by $a \oplus \mathcal{B}$. The Cartesian product $\mathcal{A} \times \cdots \times \mathcal{A}$ taken $t$-times is denoted by $\mathcal{A}^t$.  Given a function $f:\R^{n} \to \R^m$, we denote the set-valued map $f(\mathcal{A}):=\{f(a) \in \R^m : a \in \mathcal{A}\}$.
	Given a vector $a \in \R^n$, $|a| \in \R^n$ denotes the element-wise absolute value vector. {\color{black}Given matrices $M_1,\cdots,M_N \in \R^{n \times m}$, we define their convex hull as the set of matrices
		$\mathrm{CH}\{M_i,i \in \mathbb{I}_1^N\}:=\{M\in \R^{n \times m}: M=\sum_{i=1}^N \lambda_i M_i,\ \sum_{i=1}^N \lambda_i=1,\ \lambda \geq 0\}$.}
	
	\section{Problem definition}
	\label{sec:definition}
	Assume that we have input and output data measurements available by exciting the nonlinear plant
	\begin{align}\label{eq:underlying_nonlinear}
		\mathbf{z}^+ = \mathbf{f}(\mathbf{z},u), &&
		y = \mathbf{g} (\mathbf{z}), 
	\end{align}
	where $u \in \R^{n_u}$ is the input, $y \in \R^{n_y}$ the output, $\mathbf{z} \in \R^{n_{\mathbf{z}}}$
	the state vector, and $\mathbf{z}^+$ denotes the successor state {\color{black}after a sample step}. We assume that the functions $\mathbf{f}$ and $\mathbf{g}$, as well as the state dimension $n_{\mathrm{z}}$, are unknown. 
	Our goal is to identify a control-oriented model of~\eqref{eq:underlying_nonlinear}. Precisely, we want to solve the following problem:
	\begin{problem}
		\label{prob:basic_problem}
		\textit{Given a dataset $\mathcal{D}:=\{(u_t,y_t),t \in \mathbb{I}_0^{N-1}\}$ 
			of input-output measurements {\color{black}collected} from~\eqref{eq:underlying_nonlinear}, identify a model {\color{black}capable of accurately predicting the plant’s behavior. Moreover, the identified model should be suitable for the synthesis of a feedback controller that regulates the plant output $y$ within a prescribed set $\mathbb{Y}$ under input constraints $u \in \mathbb{U}$.}} \\
	\end{problem}
	
	Problem~\eqref{prob:basic_problem} describes the entire learning-based control design pipeline, i.e., system identification and constrained controller synthesis. This problem is usually tackled in a cascaded way: first identify a model that best fits the available training/test data, then synthesize a controller. However, it may happen that, after identifying the model, synthesizing a controller that robustly satisfy output constraints under limited
	actuation is an impossible task, due to having separated the identification and robust control concerns.  The remaining part of this section is devoted to formulate a conceptual problem of {\it system identification with controller synthesis guarantees}.
	
	{\color{black}Before proceeding further, some clarifications regarding the structural assumptions on the underlying plant are due. Firstly, as we rely strictly on input-output data in $\mathcal{D}$, the identified model will inherently reside within the observable subspace of~\eqref{eq:underlying_nonlinear}. Secondly, as we will see later, our approach to providing controller synthesis guarantees relies on the feasibility of a control invariant set satisfying the constraints. Consequently, the issue of controllability is handled implicitly: If the observable subspace is fully controllable, our procedure seeks to find a model that admits a feasible control invariant set, hence solving Problem \ref{prob:basic_problem}. Conversely, if the observable subspace contains uncontrollable modes, our procedure yields a feasible solution only if these modes are stabilizable within the constraints. Otherwise, it returns infeasibility, implying that Problem \ref{prob:basic_problem} does not have a feasible solution.}
	
	Let us first consider the following model
	\begin{align}
		\label{eq:conceptual_model}
		x^+ = f(x,u), && \hat{y} = g(x)
	\end{align}
	of~\eqref{eq:underlying_nonlinear}, where $x \in \R^{n_x}$ is the state of the model. The learning problem is usually addressed by solving the optimization problem
	\begin{align}
		\label{eq:pure_sysID}
		&\min_{f,g,x_0} \ \ \frac{1}{N}\sum_{t=0}^{N-1} \|y_t - g(x_t)\|_2^2 \\
		& \ \ \text{s.t.} \ \ \ x_{t+1}=f(x_t,u_t), \ t \in \mathbb{I}_0^{N-1}, \nonumber
	\end{align}
	where $(u_t,y_t)\in\mathcal{D}$, possibly under $\ell_2$-regularization on the coefficients parameterizing $f$ and $g$.
	While the solution to Problem~\eqref{eq:pure_sysID} might render the subsequent model-based controller synthesis feasible, this is not guaranteed. To ameliorate this, we introduce a \textit{control-oriented regularization} into Problem~\eqref{eq:pure_sysID}.
	This regularization is based on the observation that the output of~\eqref{eq:conceptual_model} may not exactly match the output of~\eqref{eq:underlying_nonlinear}, such that a controller synthesized using~\eqref{eq:conceptual_model} may lack \textit{robustness} against the model mismatch, potentially leading to constraint violation. We address this using the state observer model
	\begin{align}
		\label{eq:conceptual_observer}
		z^+ = f(z,u)+\zeta(z,u,w), && \hat{y}=g(z),
	\end{align}
	where $\zeta$ is the observer function, and $w \in \R^{n_y}$ is the output discrepancy defined as $w:=y-g(z).$
	{\color{black}
		We will now characterize an uncertain model using \eqref{eq:conceptual_observer} that captures the
		modeling error in~\eqref{eq:conceptual_model} based on the following assumption.
		\begin{assumption}
			\label{ass:underlying_plant}
			The behaviour of system~\eqref{eq:underlying_nonlinear} can equivalently be described by the model
			\begin{align}
				\label{eq:underlying_fake}
				\hat{x}^+=f(\hat{x},u), && y=g(\hat{x})+v,
			\end{align}
			where $v \in \mathbb{V} \subseteq \R^{n_y}$ is an unknown measurement noise. 
		\end{assumption}
		
		We emphasize that the set $\mathbb{V}$ is not an intrinsic property of the system~\eqref{eq:underlying_nonlinear} known a priori. Rather, it characterizes the quality of the identified model $(f,g)$ with respect to the true system. A more accurate model results in a ``smaller'' set $\mathbb{V}$. In practice, $\mathbb{V}$ is estimated \textit{a posteriori} in a data-driven manner: once $(f,g)$ are fixed by solving~\eqref{eq:pure_sysID}, $\mathbb{V}$ is constructed as the set containing the output residuals $y_t - g(\hat{x}_t)$ observed during simulation. Furthermore, while we refer to $v$ as \textit{measurement noise} according to standard nomenclature, we emphasize that it must be accounted for when enforcing $y \in \mathbb{Y}$. We use the observer in \eqref{eq:conceptual_observer} to estimate the state of \eqref{eq:underlying_fake}. By defining the observer error as $e:=\hat{x}-z$, the error dynamics are given by
		\begin{align}
			\label{eq:fake_error_dynamics}
			e^+ = f(\hat{x},u)-f(z,u)-\zeta(z,u,g(\hat{x})+v-g(z)).
		\end{align}
		\begin{assumption}
			\label{ass:stable_observer}
			The observer function $\zeta$ renders the error dynamics in \eqref{eq:fake_error_dynamics} robustly stable, i.e., by denoting the input and measurement noise sequences as $\mathbf{u}_t:=(u_0,\cdots,u_{t-1})$ and $\mathbf{v}_t:=(v_0,\cdots,v_{t-1})$, respectively, there exists a robust positive invariant set $\mathcal{E} \subseteq \R^{n_x}$ such that 
			\begin{align}
				\label{eq:PI_error}
				e_0 \in \mathcal{E} \ \Rightarrow \ e_t \in \mathcal{E}, && \forall t \in \mathbb{N},
			\end{align}
			for all input and noise sequences $\mathbf{u}_t \in \mathbb{U}^t$ and $\mathbf{v}_t \in \mathbb{V}^t$.
		\end{assumption}
		
		Assumption \ref{ass:stable_observer} states that the error remains bounded in a set $\mathcal{E}$ for all time irrespective of the input and noise sequences. Note that under this assumption, the input sequence is not parameterized by a controller; instead, invariance is required to hold for all admissible input sequences. While this assumption might seem strong, it is routinely employed in observer-based control design, see, e.g.,~\cite{Mayne2006}. For example, suppose that System~\eqref{eq:underlying_fake} is described by the linear time-invariant (LTI) dynamics $\hat{x}^+ = A\hat{x}+Bu$ and $y=C\hat{x}+v$
		for some measurement noise $v \in \mathbb{V}$, and the observer model in~\eqref{eq:conceptual_observer} is described by $z^+ = Az+Bu+K(y-Cz)$ with observer gain $K$.
		Then, the minimal robust positive invariant set
		\begin{align*}
			\mathcal{E}=\bigoplus_{t=0}^{\infty}(A-KC)^t (-K \mathbb{V}),
		\end{align*}
		satisfies~\eqref{eq:PI_error}, and is bounded if $\rho(A-KC)<1$~\cite{Kolmanovsky1998}. Similar results for Lipschitz continuous nonlinear systems were presented in \cite{Phanomchoeng2010}.
		
		Towards deriving the uncertain model, we note that since Assumptions \ref{ass:underlying_plant} and \ref{ass:stable_observer} ensure the existence of sets $\mathbb{V}$ and $\mathcal{E}$ respectively, there always exists some disturbance set $\mathbb{W} \subseteq \R^{n_y}$ that encapsulates the measurement noise and observer error by satisfying the inclusion 
		\begin{align}
			\label{eq:W_requirement}
			\{g(\hat{x})+v-g(z) \ : \ \hat{x}-z \in \mathcal{E}, \ v \in \mathbb{V}\} \subseteq \mathbb{W}.
		\end{align}
		For the LTI example presented above, the set $\mathbb{W}=C\mathcal{E} \oplus \mathbb{V}$ satisfies \eqref{eq:W_requirement}.
		In the sequel, we treat $\mathbb{W}$ as a design parameter that must satisfy the following assumption.
		\begin{assumption}
			\label{ass:W_assumption}
			The disturbance set $\mathbb{W} \subseteq \R^{n_y}$ satisfies the inclusion in \eqref{eq:W_requirement}.
		\end{assumption}

		For a given input sequence $\mathbf{u}_t \in \mathbb{U}^t$ and disturbance sequence $\mathbf{w}_t \in \mathbb{W}^t$, let us denote by $\hat{y}(z_0,\mathbf{u}_t,\mathbf{w}_t)$ the output of \eqref{eq:conceptual_observer} at time $t$ from initial state $z_0 \in \R^{n_x}$. Then, the reachable set of outputs of \eqref{eq:conceptual_observer} at time $t$ for all disturbance sequences and a fixed input sequence is given by
		\begin{align*}
			\hat{Y}_t(z_0,\mathbf{u}_t):=\{y\in\mathbb{R}^{n_y}:\ \exists \mathbf{w}_t \in \mathbb{W}^t
			:\ y=\hat{y}(z_0,\mathbf{u}_t,\mathbf{w}_t)\}
		\end{align*}
		\begin{proposition}
			\label{prop:basic_observer}
			Suppose Assumptions \ref{ass:underlying_plant}, \ref{ass:stable_observer} and \ref{ass:W_assumption} hold. From a given initial state $\hat{x}_0 \in \R^{n_x}$, denote the output of \eqref{eq:underlying_fake} for any given input sequence $\mathbf{u}_t \in \mathbb{U}^t$ and noise sequence  $\mathbf{v}_t \in \mathbb{V}^t$ as  $y(\hat{x}_0,\mathbf{u}_t,\mathbf{v}_t)$. Then, the inclusion
			\begin{align}
				\label{eq:robustness_guarantee}
				y(\hat{x}_0,\mathbf{u}_t,\mathbf{v}_t) \in \hat{Y}_t(z_0,\mathbf{u}_t), && \forall t \in \mathbb{N}
			\end{align}
			holds if initial states $\hat{x}_0$ and $z_0$ satisfy $\hat{x}_0-z_0 \in \mathcal{E}$.
		\end{proposition}
		\begin{proof}
			Under Assumptions  \ref{ass:underlying_plant} and \ref{ass:stable_observer}, there exists some set $\mathbb{W} \subseteq \R^{n_y}$ satisfying Assumption \ref{ass:W_assumption}.
			The proof concludes if for any noise sequence $\mathbf{v}_t \in \mathbb{V}^t$, there exists some disturbance sequence $\mathbf{w}_t \in \mathbb{W}^t$ such that $y(\hat{x}_0,\mathbf{u}_t,\mathbf{v}_t)=\hat{y}(z_0,\mathbf{u}_t,\mathbf{w}_t)$ holds if $\hat{x}_0 - z_0 \in \mathcal{E}$. Clearly, this is satisfied by the sequence 
			\begin{align*}
				\tilde{\mathbf{w}}_t=(y(\hat{x}_0,\mathbf{u}_1,\mathbf{v}_1)&-g(z(z_0,\mathbf{u}_1,\mathbf{v}_1)),  \\  & \cdots, y(\hat{x}_0,\mathbf{u}_t,\mathbf{v}_t)-g(z(z_0,\mathbf{u}_t,\mathbf{v}_t))).
			\end{align*} 
			It remains to show that $\tilde{\mathbf{w}}_t \in \mathbb{W}^t$. Observe that from~\eqref{eq:PI_error}, the sequence $\tilde{\mathbf{w}}_t$ ensures that $\hat{x}(\hat{x}_0,\mathbf{u}_t) - z(z_0,\mathbf{u}_t,\mathbf{v}_t) \in \mathcal{E}$, and the condition on $\mathbb{W}$ in~\eqref{eq:W_requirement} that $\tilde{\mathbf{w}}_t \in \mathbb{W}^t$.
		\end{proof}
		
		Hence, we can define the uncertain dynamical system
		\begin{align}
			\label{eq:robust_model}
			z^+ \in f(z,u) \oplus \zeta(z,u,\mathbb{W}), && y \in g(z) \oplus \mathbb{W}.
		\end{align}
		This uncertain model serves as a set-valued surrogate of the true system. Its utility is grounded in the guarantee provided by Proposition~\ref{prop:basic_observer}: provided that \eqref{eq:robust_model} is initialized such that its state is consistent with that of \eqref{eq:underlying_fake}, i.e., $\hat{x}_0-z_0 \in \mathcal{E}$, we can simulate~\eqref{eq:robust_model} in a set-valued manner to encompass all possible future behaviors of \eqref{eq:underlying_fake} since \eqref{eq:robustness_guarantee} holds. Consequently, if a controller is designed to keep the reachable output set $\hat{Y}_t(z_0,\mathbf{u}_t)$ of \eqref{eq:robust_model} within the safety constraints, the true system~\eqref{eq:underlying_nonlinear} is guaranteed to satisfy them. The following result can be used to guarantee the existence of a feasible control law which ensures $\hat{Y}_t(z_0,\mathbf{u}_t) \subseteq \mathbb{Y}$ for all $t \in \mathbb{N}$.
	}
	{\color{black}
		\begin{proposition}
			\label{prop:RCI_exists}
			Suppose Assumptions \ref{ass:underlying_plant}, \ref{ass:stable_observer} and \ref{ass:W_assumption} hold. Suppose further that there exists a set $X \subseteq \R^{n_x}$ and a control law $\mu(\cdot):X \to \mathbb{U}$ that satisfy the inclusion
			\begin{subequations}
				\label{eq:RCI_fundamentals}
				\begin{align}
					\label{eq:RCI_dynamics}
					f(z,\mu(z)) \oplus \zeta(z,\mu(z),\mathbb{W}) &\subseteq X, && \forall z \in X.
				\end{align}
			\end{subequations}
			Then, if the initial state of the observer \eqref{eq:conceptual_observer} satisfies $z_0 \in X$, initial state of the plant in \eqref{eq:underlying_fake} satisfies $\hat{x}_0-z_0 \in \mathcal{E}$, and the set $X$ satisfies the output inclusion
			\addtocounter{equation}{-1} 
			\begin{subequations}
				\addtocounter{equation}{1} 
				\begin{align}
					\label{eq:output_inclusion}
					g(z) \oplus \mathbb{W} &\subseteq \mathbb{Y}, && \forall z \in X,
				\end{align}
			\end{subequations}
			the control input $u_t=\mu(z_t)$, where $z_t$ is the state of \eqref{eq:conceptual_observer} at time $t$ from initial state $z_0$ and any arbitrary disturbance sequence $\mathbf{w}_t \in \mathbb{W}^t$ ensures that the plant output satisfies $y_t \in \mathbb{Y}$ for all $t \in \mathbb{N}$, thus guaranteeing constraint satisfaction.
		\end{proposition}
		\begin{proof}
			Under Assumptions \ref{ass:underlying_plant}, \ref{ass:stable_observer} and \ref{ass:W_assumption}, inclusion \eqref{eq:robustness_guarantee} holds for any $\mathbf{u}_t \in \mathbb{U}^t$ and $\mathbf{v}_t \in \mathbb{V}^t$ if $\hat{x}_0-z_0 \in \mathcal{E}$ as per Proposition \ref{prop:basic_observer}. Then, given inclusion \eqref{eq:RCI_dynamics}, if \eqref{eq:conceptual_observer} is initialized with $z_0 \in X$, then we have $z_1 \in X$ with $u_0 = \mu(z_0)$ for all disturbances $w_0 \in \mathbb{W}$. From \eqref{eq:output_inclusion} and \eqref{eq:robustness_guarantee}, this implies that the plant satisfies $y_1 \in \mathbb{Y}$ for all $v_1 \in \mathbb{V}$. Assume $z_t \in X$, and apply $u_t = \mu(z_t)$. Then we have $z_{t+1} \in X$ for any $w_t \in \mathbb{W}$ as per condition~\eqref{eq:RCI_dynamics}. Consequently, by \eqref{eq:output_inclusion} and Proposition~\ref{prop:basic_observer}, the output $y_{t+1} \in \mathbb{Y}$. Thus, $y_t \in \mathbb{Y}$ for all $t \in \mathbb{N}$.
		\end{proof}
	}
	The conditions in~\eqref{eq:RCI_fundamentals} enforce $X$ to be a robust control invariant (RCI) set for the uncertain system in \eqref{eq:robust_model}, with invariance induced by the feedback law $u=\mu(z)$. Since this set $X$ satisfies the output constraints, it guarantees the existence of a feasible controller for the plant, thus solving Problem \eqref{prob:basic_problem}. While any set $X$ satisfying~\eqref{eq:RCI_fundamentals} guarantees safety, we aim to construct one that is minimally conservative to maximize the controller's operating range. {\color{black} To this end, we introduce the function $\mathbf{r}(f,\zeta,g)$ in the sequel which satisfies
		\begin{subequations}
			\label{eq:r_reqmts}
			\begin{align}
				&\mathbf{r}(f,\zeta,g)< \infty \Rightarrow \exists X \text{ satisfying \eqref{eq:RCI_fundamentals}} \label{eq:r_reqmt_1}\\
				&\textit{Small } \mathbf{r}(f,\zeta,g)< \infty \ \Rightarrow \ \exists X \text{ satisfying \eqref{eq:RCI_fundamentals}}  \label{eq:r_reqmt_2} \\
				& \hspace{110pt} \text{with minimal conservativeness.} \nonumber
			\end{align}
		\end{subequations}
		The function $\mathbf{r}(f,\zeta,g)$ is designed to ensure the existence of an RCI set $X$, with small values indicating that the set $X$ is minimally conservative. While \eqref{eq:r_reqmts} characterizes qualitative requirements on the regularization function, we quantify them in the sequel.
		Using this function, we modify Problem \eqref{eq:pure_sysID} as
		\begin{align}
			\hspace{-2pt} &\min_{f,\zeta,g,x_0} \ \ \frac{1}{N}\sum_{t=0}^{N-1} \|y_t - g(x_t)\|_2^2 + \tau \mathbf{r}(f,\zeta,g) \nonumber \\
			\hspace{-2pt} & \ \ \ \  \text{s.t.} \ \ \ \ x_{t+1}=f(x_t,u_t), \ t \in \mathbb{I}_0^{N-1}.     \label{eq:regularized_sysID}
		\end{align}
		where $\tau \geq 0$ is the regularization parameter. We label $\mathbf{r}(f,\zeta,g)$ as \textit{control-oriented regularization}. As per \eqref{eq:r_reqmts}, any triplet $(f,\zeta,g)$ that admits a finite value can be used to design a controller for the underlying plant, with solutions corresponding to a small value admitting controllers with minimal conservativeness. In the sequel, we present a parameterization of the objects involved in the formulation of Problem \eqref{eq:regularized_sysID}, using which we derive a tractable formulation of the optimization problem.
		
		\subsubsection*{Connections to existing literature}
		Our contribution aims to identify open-loop plant models for set-based robust control. Traditional approaches for jointly characterizing a model and its associated uncertainty, such as \cite{Milanese2004, Lauricella2020, Cerone2008}, typically operate under conditions similar to Assumption \ref{ass:underlying_plant} with a known set $\mathbb{V}$. While these methods yield a direct characterization of the uncertain (potentially nonlinear) model, they do not guarantee the existence of an RCI set, as the output constraint set $\mathbb{Y}$ is not incorporated into the identification procedure. 
		In learning-based control frameworks like \cite{Terzi2019}, the set $\mathbb{V}$ is estimated along with an uncertain linear model for use in robust control synthesis. However, because the constraints $\mathbb{Y}$ are decoupled from the identification phase, the resulting control synthesis problem may still be infeasible. Recent contributions, such as \cite{Mulagaleti2022, Chen2022}, address this by identifying a model and a set $\mathbb{V}$ while explicitly enforcing that the model admits an RCI set within $\mathbb{Y}$, thus coupling system identification with RCI set synthesis. In these works, a parameterization of the RCI set is selected a priori, with \cite{Mulagaleti2022} deriving rigorous bounds to satisfy Assumption \ref{ass:underlying_plant}. Nevertheless, these approaches are currently limited to identifying uncertain linear systems from datasets containing full state measurements. Our approach extends these ideas to the identification of uncertain nonlinear systems using only input-output data, while guaranteeing the existence of a feasible RCI set.
		A related line of research involves safe Reinforcement Learning (RL) with robust MPC parameterizations~\cite{Zanon2021}. These methods also tackle the estimation of $\mathbb{V}$ online, and parameterize $X$ as the feasible region of the MPC controller. Finally, behavioral system~\cite{Willems2005} approaches, such as \cite{Berberich2021}, have gained attention for synthesizing robust controllers directly from noisy input-output trajectories. While originally designed for linear systems, recent extensions aim to generalize these capabilities to nonlinear models \cite{Martin2023,Verhoek2023_DD}. A detailed comparison of our approach with these data-driven methods is a subject for future work.
	}
	
	\section{Model and set parameterization}
	In this section, we present a model and RCI set parameterization, using which we formulate Problem~\eqref{eq:regularized_sysID}. We parameterize the model as a qLPV system. Such models describe state evolution using a time-varying linear map, with the linear maps scheduled using an explicitly characterized function~\cite{Toth2010Book}. They have frequently been observed to provide a very good balance between prediction accuracy and ease of robust controller design. In this work, we exploit a particular choice of scheduling function parameterization that enables a straightforward derivation of a linear system with multiplicative uncertainty that bounds the qLPV system. Then, we parameterize the RCI set as a polytope with fixed normal vectors, in which we induce robustness for the uncertain linear system using linear inequalities.  Finally, we present a formulation of the control-oriented regularization function $\mathbf{r}$ for our parameterization.
	\label{sec:parameterization}
	\subsection{Dynamical system}
	We parameterize model~\eqref{eq:conceptual_model} as the qLPV system
	\begin{align}
		x^+ = A(p(x,u))x + B(p(x,u))u, &&
		\hat{y} = Cx, \label{eq:LPV_model}
	\end{align}
	and the state observer in \eqref{eq:conceptual_observer} with ${\zeta(z,u,w)=K(p(z,u))w}$, where the
	matrix-valued functions $(A(p),B(p),K(p))$ depend linearly on the scheduling vector $p \in \R^{n_p}$ as
	\begin{align}
		\label{eq:linear_parameterization}
		(A(p), B(p),K(p)) := \sum_{i=1}^{n_p}p_i(A_i, B_i,K_i).
	\end{align}
	Further, we model each component of the scheduling function $\scalemath{0.95}{p : \R^{n_x + n_u} \to \R^{n_p}}$ as
	\begin{align}
		p_i(x,u) = \begin{cases} &\cfrac{e^{\mathcal{N}_i(x,u)}}{1+\sum_{j=1}^{n_p-1} e^{\mathcal{N}_j(x,u)}},\  i \in \mathbb{I}_1^{n_p-1},  \\
			&\cfrac{1}{1+\sum_{j=1}^{n_p-1}e^{\mathcal{N}_j(x,u)}}, \ i=n_p.
		\end{cases} \label{eq:ss-p}
	\end{align}
	where each $\mathcal{N}_i:\R^{n_x + n_u} \to \R$  is a feedforward neural network (FNN) with $n_h$ hidden layers. The parameterization~\eqref{eq:ss-p} forces $p$ to belong to the simplex $\mathcal{P}$ defined as
	\begin{align}
		\label{eq:simplex_definition}
		\mathcal{P}:=\left\{ p : \sum_{i=1}^{n_p} p_i= 1, \ 0 \leq p \leq 1\right\}.
	\end{align}
	Note that~\eqref{eq:ss-p} can be interpreted as the output of a classifier
	for predicting a multi-category target of dimension $n_p$, where $p_i$
	is the probability of the target being in category $i$, i.e., $p(x,u)$ is a feedforward neural network with softmax output. {\color{black}
		Consequently, the parameterization in \eqref{eq:ss-p} makes \eqref{eq:LPV_model} a Mixture-of-Experts (MoE) model. MoE architectures are known universal approximators for continuous functions on compact sets \cite{Nguyen2016}. While the structure of \eqref{eq:LPV_model} assumes the origin to be an equilibrium point, affine dynamics can be readily accommodated by augmenting the state vector as $\tilde{x} = [x^\top \ 1]^\top$. Therefore, provided that the plant dynamics in~\eqref{eq:underlying_nonlinear} are continuous, the proposed qLPV parameterization is sufficiently expressive to satisfy Assumption \ref{ass:underlying_plant}.
	}
	{\color{black}
		\begin{remark}
			It is important to distinguish the proposed identification approach from the ``local approach'' to LPV identification, in which independent LTI models are estimated at fixed operating points in arbitrary statespaces, requiring coherence transformations to unify them before interpolation \cite{Zhang2020}. In contrast, we employ a global identification approach, e.g., \cite{Verhoek2022}, using a self-scheduled parameter $p(x,u)$ as in \eqref{eq:ss-p}, and identify the global coefficient matrices $(A_i,B_i,K_i)$ along with the scheduling function directly. This construction ensures that the state $x$ is defined in a unified basis, rendering coherence transformations unnecessary. Furthermore, identification is performed directly using an input-output dataset without requiring a scheduling variable dataset.
		\end{remark}
	}
	
	\subsection{Robust control invariant set}
	We parameterize the RCI set $X \subseteq \R^{n_x}$ as the polytope
	\begin{align*}
		X \leftarrow \mathbb{X}(q):=\{x \in \R^{n_x} : Fx \leq q\},
	\end{align*}
	where $F \in \R^{\fe \times n_x}$ is a matrix that we fix a priori, and $q \in \R^{\fe}$ is the variable that characterizes the set. Over $q$, we enforce \textit{configuration-constraints}~\cite{Villaneuva2024}, which are conic constraints of the form $\mathcal{C}:=\{q : Eq \leq 0\}$. These constraints dictate that
	\begin{align}
		\label{eq:config_con}
		q \in \mathcal{C} \ \Rightarrow \ \mathbb{X}(q)=\mathrm{CH}\{V_l q, l \in \mathbb{I}_1^{\ve}\},
	\end{align}
	where $\{V_l \in \R^{n_x \times \fe}, l \in \mathbb{I}_1^{\ve}\}$ are the vertex maps given a priori. Then, we parameterize the disturbance set as
	\begin{align}
		\label{eq:parameterization_W}
		\mathbb{W} \leftarrow \mathbb{W}_{\kappa}(c_{\mathrm{w}},\epsilon_{\mathrm{w}}):=\{w \in \R^{n_y} : |w - c_{\mathrm{w}}| \leq \kappa \epsilon_{\mathrm{w}}\},
	\end{align}
	where $c_{\mathrm{w}},\epsilon_{\mathrm{w}} \in \R^{n_y}$ are the parameters, and $\kappa>1$ is a user-specified parameter to account for finite data. We present a characterization of the disturbance set parameters in the sequel.
	Firstly, we enforce the set $\mathbb{X}(q)$ to be RCI for the uncertain system in~\eqref{eq:robust_model}, written for the qLPV parameterization as
	\begin{align}
		z^+ &\in A(p(z,u))z+B(p(z,u))u \oplus K(p(z,u)) \mathbb{W}_{\kappa}(c_{\mathrm{w}},\epsilon_{\mathrm{w}}), \nonumber \\
		y &\in Cz \oplus \mathbb{W}_{\kappa}(c_{\mathrm{w}},\epsilon_{\mathrm{w}}) \label{eq:uncertain_qLPV}
	\end{align}
	{\color{black}
		The following result is key to our developments.
		\begin{proposition}
			\label{prop:RCI_transfer}
			Let the convex hull of the matrices $(A_i,B_i,K_i)$ be denoted by $\Delta := \mathrm{CH}\{(A_i,B_i,K_i),i \in \mathbb{I}_1^{n_p}\}$. An RCI set $\mathbb{X}(q)$ for the uncertain linear system
			\begin{align}
				\label{eq:multiplicative_RCI}
				z^+ &\in Az+Bu\oplus  K\mathbb{W}_{\kappa}(c_{\mathrm{w}},\epsilon_{\mathrm{w}}),  \nonumber \\
				y &\in Cx \oplus \mathbb{W}_{\kappa}(c_{\mathrm{w}},\epsilon_{\mathrm{w}})
			\end{align}
			with multiplicative uncertainty $(A,B,K) \in \Delta$ is RCI also for the uncertain nonlinear system in \eqref{eq:uncertain_qLPV}.
		\end{proposition}
		\begin{proof}
			Take any $z \in \mathbb{X}(q)$. Since $\mathbb{X}(q)$ is RCI for the linear system in \eqref{eq:multiplicative_RCI}, there exists some input $u \in \mathbb{U}$ such that $Az+Bu+Kw \in \mathbb{X}(q)$ for all $w \in \mathbb{W}_{\kappa}(c_{\mathrm{w}},\epsilon_{\mathrm{w}})$ and $(A,B,K) \in \Delta$. From the parameterizations in \eqref{eq:linear_parameterization}, \eqref{eq:ss-p} and \eqref{eq:simplex_definition}, we have that $(A(p(z,u)),B(p(z,u)),K(p(z,u))) \in \Delta$, which implies the subsequent state of \eqref{eq:uncertain_qLPV} also satisfies
			\begin{align*}
				A(p(z,u))z+B(p(z,u))u+K(p(z,u))w \in \mathbb{X}(q)
			\end{align*}
			for all $w \in \mathbb{W}_{\kappa}(c_{\mathrm{w}},\epsilon_{\mathrm{w}})$. Hence, $\mathbb{X}(q)$ is RCI also for \eqref{eq:uncertain_qLPV}.
		\end{proof}
	}
	
	Exploiting Proposition \ref{prop:RCI_transfer}, we now characterize conditions on $q$ such that $\mathbb{X}(q)$ is an RCI set for~\eqref{eq:multiplicative_RCI}. In this result, we assume that the constraint sets $\mathbb{U}$ and $\mathbb{Y}$ are defined as
	\begin{align*}
		\mathbb{U}:=\{u \in \R^{n_u} : H^u u \leq h^u\}, \  \mathbb{Y}:=\{y \in \R^{n_y} : H^y y \leq h^y\}.
	\end{align*}
	\begin{proposition}
		\label{prop:RCI_multiplicative}
		A set $\mathbb{X}(q)$ is an RCI set for system~\eqref{eq:multiplicative_RCI} with respect to constraints $y \in \mathbb{Y}$ and $u \in \mathbb{U}$ if there exist control inputs $\mathrm{u}:=(\mathrm{u}_1,\cdots,\mathrm{u}_{\ve})$ such that the inequalities
		\begin{subequations}
			\label{eq:RCI_CC}
			\begin{align}
				F(A_iV_l q + B_i \mathrm{u}_l+K_i c_{\mathrm{w}}) + \kappa|FK_i|\epsilon_{\mathrm{w}} &\leq q, \label{eq:RCI_CC_1}\\
				H^y (CV_l q + c_{\mathrm{w}})+ \kappa|H^y| \epsilon_{\mathrm{w}} &\leq h^y, \label{eq:RCI_CC_2} \\
				H^u \mathrm{u}_l \leq h^u, \ \  E q &\leq 0, \label{eq:RCI_CC_4}
			\end{align}
		\end{subequations}
		are verified for all $i \in \mathbb{I}_1^{n_p}$ and $l \in \mathbb{I}_1^{\ve}$.
	\end{proposition}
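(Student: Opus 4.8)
The plan is to exhibit an explicit vertex-interpolating control law and verify the two inclusions required by~\eqref{eq:RCI_fundamentals}, specialized to system~\eqref{eq:multiplicative_RCI}. First I would invoke the configuration constraint: since~\eqref{eq:RCI_CC_4} enforces $Eq \leq 0$, the implication~\eqref{eq:config_con} gives the vertex representation $X(q)=\mathrm{CH}\{V_k q, k \in \mathbb{I}_1^{\ve}\}$, so every $z \in X(q)$ admits a convex decomposition $z = \sum_{k=1}^{\ve} \lambda_k V_k q$ with $\lambda_k \geq 0$ and $\sum_k \lambda_k = 1$. I would then define the feedback $\mu(z):=\sum_{k=1}^{\ve}\lambda_k \mathrm{u}_k$, i.e., interpolate the vertex inputs using the \emph{same} weights that express $z$. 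Feasibility of the input is then immediate: each $\mathrm{u}_k \in U$ by the first inequality in~\eqref{eq:RCI_CC_4}, $U$ is a polytope and hence convex, so $\mu(z)\in U$.

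For the output constraint I would write any $w \in \mathbb{W}(c_{\mathrm{w}},\epsilon_{\mathrm{w}})$ as $w = c_{\mathrm{w}}+\delta$ with $|\delta|\leq \kappa\epsilon_{\mathrm{w}}$, and use the element-wise bound $H^y\delta \leq |H^y|\,|\delta| \leq \kappa|H^y|\epsilon_{\mathrm{w}}$. Substituting $Cz = \sum_k \lambda_k CV_k q$ and using $\sum_k\lambda_k=1$ recasts $H^y(Cz+w)$ as being upper bounded by the convex combination $\sum_k \lambda_k\bigl[H^y(CV_k q + c_{\mathrm{w}})+\kappa|H^y|\epsilon_{\mathrm{w}}\bigr]$, each summand of which is $\leq h^y$ by~\eqref{eq:RCI_CC_2}; convexity then yields $H^y(Cz+w)\leq h^y$, i.e. $Cz\oplus\mathbb{W}(c_{\mathrm{w}},\epsilon_{\mathrm{w}})\subseteq Y$.

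The invariance inclusion is the crux, and it is where the bilinear interaction between the multiplicative uncertainty and the state decomposition must be handled carefully. Since $(A,B,K)\in\Delta$, I would write $(A,B,K)=\sum_{i=1}^{n_p}p_i(A_i,B_i,K_i)$ with $p\in\mathcal{P}$, and substitute the decompositions of $z$, $\mu(z)$, and $w=c_{\mathrm{w}}+\delta$. The successor value $F(Az+B\mu(z)+Kw)$ then expands into a double sum over $(i,k)$ with nonnegative weights $p_i\lambda_k$. The key accounting observation is that $\sum_i\sum_k p_i\lambda_k = \bigl(\sum_i p_i\bigr)\bigl(\sum_k\lambda_k\bigr)=1$, so the constant term $\sum_i p_i FK_i c_{\mathrm{w}}$ and the disturbance bound $\sum_i p_i\kappa|FK_i|\epsilon_{\mathrm{w}}$ (from $FK_i\delta \leq \kappa|FK_i|\epsilon_{\mathrm{w}}$) can each be redistributed across the $\lambda_k$. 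This shows $F(Az+B\mu(z)+Kw)$ is upper bounded by $\sum_{i,k}p_i\lambda_k\bigl[F(A_iV_k q + B_i\mathrm{u}_k + K_i c_{\mathrm{w}})+\kappa|FK_i|\epsilon_{\mathrm{w}}\bigr]$, whose every bracket is $\leq q$ by~\eqref{eq:RCI_CC_1}; as the weights sum to one, the whole expression is $\leq q$, giving $Az+B\mu(z)+Kw\in X(q)$ for all $(A,B,K)\in\Delta$ and $w\in\mathbb{W}(c_{\mathrm{w}},\epsilon_{\mathrm{w}})$.

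I expect the main obstacle to be precisely this last bookkeeping: the control input $\mathrm{u}_k$ is indexed only by the state-vertex $k$ and not by the uncertainty vertex $i$, so one must confirm that a single weighting $\lambda_k$ can simultaneously govern the state decomposition and the input, and that the products $p_i\lambda_k$ form a genuine convex combination over the joint index set $\mathbb{I}_1^{n_p}\times\mathbb{I}_1^{\ve}$. Once this is established, each elementary inequality follows directly from~\eqref{eq:RCI_CC}, and together the input, output, and invariance properties verify~\eqref{eq:RCI_fundamentals}, certifying that $X(q)$ is RCI for~\eqref{eq:multiplicative_RCI}.
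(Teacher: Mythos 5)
Your argument is correct, and every step checks out: the configuration constraint $Eq\leq 0$ in~\eqref{eq:RCI_CC_4} licenses the vertex decomposition via~\eqref{eq:config_con}, the products $p_i\lambda_k$ do form a convex combination over $\mathbb{I}_1^{n_p}\times\mathbb{I}_1^{\ve}$, and the element-wise bounds $FK_i\delta\leq\kappa|FK_i|\epsilon_{\mathrm{w}}$ and $H^y\delta\leq\kappa|H^y|\epsilon_{\mathrm{w}}$ are exactly what is needed to absorb the disturbance into~\eqref{eq:RCI_CC_1} and~\eqref{eq:RCI_CC_2}. The difference from the paper is one of granularity rather than substance: the paper's proof is a one-line deferral to the definition of RCI sets and to Corollary~4 of~\cite{Villaneuva2024}, with $\mathrm{u}$ identified as the vertex control inputs, whereas you reconstruct from first principles the content of that cited result\,---\,the vertex-interpolating feedback $\mu(z)=\sum_k\lambda_k\mathrm{u}_k$ and the joint convex combination handling the bilinear interaction between the scheduling simplex and the state vertices. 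Your version buys a self-contained, verifiable certificate of~\eqref{eq:RCI_fundamentals} and makes explicit why a single set of inputs indexed only by $k$ (not by the uncertainty vertex $i$) suffices; the paper's version buys brevity at the cost of opacity. One cosmetic remark: when the convex decomposition of $z$ is non-unique, $\mu$ as written is not a well-defined function until you fix a selection rule for $\lambda(z)$; this is harmless here, since robust control invariance only requires the \emph{existence} of an admissible input for each $z\in X(q)$, which your construction delivers.
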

	\begin{proof}
		The proof follows from definition of RCI sets and ~\cite[Corollary 4]{Villaneuva2024} which exploits \eqref{eq:config_con}, where $\mathrm{u}$ denotes the vertex control inputs.
	\end{proof}
	
	For fixed system matrices $(A_i,B_i,K_i,C)$ and disturbance set parameters $(\mathrm{c}_{\mathrm{w}},\epsilon_{\mathrm{w}},\kappa)$, the conditions in \eqref{eq:RCI_CC} are linear in $(q,\mathrm{u})$. We exploit this property in the development of the regularization function $\mathbf{r}$ in the sequel.
	
	\subsection{Disturbance set characterization}
	We recall from Proposition \ref{prop:RCI_exists} that the RCI set $\mathbb{X}(q)$ of \eqref{eq:uncertain_qLPV} can be used to guarantee the existence of a controller for the underlying plant if the disturbance set $\mathbb{W}_{\kappa}(\mathrm{c}_{\mathrm{w}},\epsilon_{\mathrm{w}})$ satisfies Assumption \ref{ass:W_assumption}. We now present a procedure to estimate the disturbance set parameters $(\mathrm{c}_{\mathrm{w}},\epsilon_{\mathrm{w}},\kappa)$ such that this assumption is satisfied. Assuming access to an additional dataset
	$\mathcal{D}^{\mathrm{w}}:=\{(y_t,u_t), t \in \mathbb{I}_0^{N_{\mathrm{w}-1}}\}$ of input-output measurements from the plant, we simulate the closed-loop observer
	\begin{align}
		\label{eq:closed_loop_observer}
		\hspace{-5pt} \scalemath{0.98}{z^+=A(p(z,u))z+B(p(z,u))u+K(p(z,u))(y-Cz)},
	\end{align}
	(equivalent to \eqref{eq:conceptual_observer}) using inputs and outputs from $\mathcal{D}^{\mathrm{w}}$, from $z_0=0$. Then, building a sampled set of the output discrepancy as $\mathcal{W}:=\{y_t-Cz_t, t \in \mathbb{I}_0^{N_{\mathrm{w}-1}}\}$,
	we can define the prediction error bounds taken elementwise as $\overline{w}:=\max_{w \in \mathcal{W}} w$ and $\underline{w}:=\min_{w \in \mathcal{W}} w.$
	Given the disturbance set parameterization in \eqref{eq:parameterization_W}, the inclusion $\mathcal{W} \subset \mathbb{W}_{\kappa}(c_{\mathrm{w}},\epsilon_{\mathrm{w}})$ then holds with
	\begin{align}
		\label{eq:W_parameters_unique}
		c_{\mathrm{w}}=0.5( \overline{w}+\underline{w}), && \epsilon_{\mathrm{w}}=0.5( \overline{w}-\underline{w})
	\end{align}
	for any $\kappa \geq 1$. Since $\mathcal{D}^{\mathrm{w}}$ is a finite dataset however, the disturbance set $\mathbb{W}_{\kappa}(c_{\mathrm{w}},\epsilon_{\mathrm{w}})$ might not satisfy Assumption \ref{ass:W_assumption} with $\kappa=1$. In the following result, we present a lower-bound on the inflation parameter $\kappa$ to ensure its satisfaction.
	{\color{black}
		\begin{proposition}
			\label{prop:kappa_bound}
			Suppose Assumptions \ref{ass:underlying_plant} and \ref{ass:stable_observer} are satisfied by the model structure in \eqref{eq:LPV_model} and observer function $K(p(z,u))w$ respectively, such that there exists an invariant set $\mathcal{E} \subseteq \R^{n_x}$ in which the observer error $e=\hat{x}-z$ is bounded. Suppose that $0 \in \mathcal{E}$, $\hat{x}_0 \in \mathcal{E}$, and the dataset $\mathcal{D}^{\mathrm{w}}$ is such that
			\begin{align}
				\label{eq:distance_assumption}
				\forall \delta \in C\mathcal{E}\oplus \mathbb{V}, \ \exists t \in \mathbb{I}_0^{N_{\mathrm{w}}-1} : |\delta-(y_t-Cz_t)| \leq \alpha,
			\end{align}
			for some $\alpha>0$, 
			where $z_t$ is the state of \eqref{eq:closed_loop_observer} simulated from $z_0 = 0$ using $\mathcal{D}^{\mathrm{w}}$. Then, if
			\begin{align}
				\label{eq:kappa_bound}
				\kappa \geq 1+\max_{i \in \mathbb{I}_1^{n_y}} \alpha_i/\epsilon_{\mathrm{w}_i},
			\end{align}
			the disturbance set $\mathbb{W}_{\kappa}(c_{\mathrm{w}},\epsilon_{\mathrm{w}})$ with $(c_{\mathrm{w}},\epsilon_{\mathrm{w}})$ estimated as in \eqref{eq:W_parameters_unique} satisfies Assumption \ref{ass:W_assumption}.
		\end{proposition}
		\begin{proof}
			Assumptions \ref{ass:underlying_plant} and \ref{ass:stable_observer}, along with $\hat{x}_0 \in \mathcal{E}$ and $z_0=0$ guarantee that $e_t=\hat{x}_t-z_t \in \mathcal{E}$ for all $t \in \mathbb{I}_0^{N_{\mathrm{w}-1}}$. For any plant and observer states $\hat{x},z \in \R^{n_x}$ satisfying the inclusion $C(\hat{x}-z) \in C\mathcal{E} \oplus \mathbb{V}$, we can then write
			\begin{align*}
				&|C(\hat{x}-z)-c_{\mathrm{w}}| \nonumber \\
				& \hspace{20pt}\leq |(y_t-Cz_t)-c_{\mathrm{w}}| + |C(\hat{x}-z)-(y_t-Cz_t)| \nonumber \\
				& \hspace{50pt}\leq \epsilon_{\mathrm{w}} + \alpha \leq \kappa \epsilon_{\mathrm{w}}, \nonumber
			\end{align*}
			where the first inequality follows from the triangle inequality, the second from \eqref{eq:W_parameters_unique} and \eqref{eq:distance_assumption} for some $t \in \mathbb{I}_0^{N_{\mathrm{w}}-1}$, and the third from~\eqref{eq:kappa_bound}. Hence, $C\mathcal{E} \oplus \mathbb{V} \subseteq \mathbb{W}_{\kappa}(c_{\mathrm{w}},\epsilon_{\mathrm{w}})$ holds, thus satisfying Assumption \ref{ass:W_assumption}.
		\end{proof}
		\begin{assumption}
			\label{ass:kappa_assumption}
			The inflation parameter $\kappa> 1$ is chosen to be large enough that \eqref{eq:kappa_bound} is verified.
		\end{assumption}
		
		Under Assumption \ref{ass:kappa_assumption}, the disturbance set $\mathbb{W}_{\kappa}(\mathrm{c}_{\mathrm{w}},\epsilon_{\mathrm{w}})$ is guaranteed to satisfy Assumption \ref{ass:W_assumption}, such that ensuring the existence of an RCI set $\mathbb{X}(q)$ ensures the existence of a controller for the underlying plant as per Proposition \ref{prop:RCI_exists}. In practice, a priori satisfaction of Assumption \ref{ass:kappa_assumption} is difficult to guarantee as $\alpha$ is generally unknown. While $\kappa \approx 1$ may suffice for a large dataset $\mathcal{D}^{\mathrm{w}}$ generated by persistently exciting inputs, the exact computation of $\kappa$ remains a fundamental challenge in data-driven control \cite{Mulagaleti2022, Zanon2021}. We remark that $\kappa$ acts as a \textit{robustness parameter}: larger values yield a larger disturbance set ensuring the RCI set is robust against significant model uncertainty, but potentially leading to infeasibility. While bootstrapping techniques utilizing multiple datasets $\mathcal{D}^{\mathrm{w}}$ could offer a way to estimate $\kappa$ more accurately, the development of such methods remains a subject for future research.

		\subsection{Control-oriented regularization function}
		We now present a formulation of the regularization function $\mathbf{r}$ satisfying the requirements in \eqref{eq:r_reqmts}. We denote by $\Theta$ the vector containing the parameters of the qLPV models in \eqref{eq:LPV_model} and \eqref{eq:uncertain_qLPV}, i.e., the matrices $(A_i,B_i,K_i)$, the output matrix $C$, and the weights of the FNNs parameterizing the scheduling function in \eqref{eq:ss-p}. Given $\Theta$, the disturbance set parameters $(\mathrm{c}_{\mathrm{w}}, \epsilon_{\mathrm{w}}, \kappa)$ are uniquely defined as per \eqref{eq:W_parameters_unique}. We recall that $\Theta$ parameterizes the linear inequalities in \eqref{eq:RCI_CC}. Accordingly, we define the set of admissible RCI parameters as
		\begin{align}
			\label{eq:RCI_set_parameters}
			\mathbb{Q}(\Theta):=\{q \in \R^{\fe} : \exists \mathrm{u} \in \R^{\ve n_u}: (q,\mathrm{u}) \ \text{satisfy} \  \eqref{eq:RCI_CC}\}.
		\end{align}
		As established in Propositions \ref{prop:RCI_transfer} and \ref{prop:RCI_multiplicative}, any vector $q \in \mathbb{Q}(\Theta)$ defines an RCI set $\mathbb{X}(q)$ for the uncertain nonlinear system \eqref{eq:uncertain_qLPV}. Hence, by Proposition \ref{prop:RCI_exists}, if $\mathbb{Q}(\Theta)$ is nonempty under the aforementioned assumptions, the identified model can be used to design a controller for the underlying plant.
		
		In designing the regularization function, we emphasize that the metric for the \textit{conservativeness} of an RCI set must align with the downstream control objectives. For instance, if the goal is robust stabilization to the origin, the function 
		\begin{align}
			\label{eq:mRCI}
			\mathbf{r}(\Theta):=\min_{q \in \mathbb{Q}(\Theta)} \|q\|_1  
		\end{align} 
		inspired by~\cite{Trodden2016} is appropriate, as it encourages a minimal RCI set. In this paper, our objective is to synthesize output tracking controllers. To this aim, we assume access to an output reference trajectory $\{y^{\mathrm{r}}_{k} \in \R^{n_y},k \in \mathbb{I}_0^{M}\}$, using which we define the regularization function as
		\begin{align}
			\label{eq:r_tracking}
			\	\mathbf{r}(\Theta) := & \min_{q, \mathbf{v}} \ \sum_{k=0}^M \|y_{k}^{\mathrm{r}}-Cz_{k}\|_2^2 \\
			& \ \text{s.t.} \ 
			\begin{cases}
				q \in \mathcal{Q}(\Theta),\  z_0 = 0, \ v_{k} \in \mathbb{U}, \ z_{k} \in \mathbb{X}(q),\\[1ex]
				z_{{k}+1} = A(p(z_{k},v_{k}))z_{k} + B(p(z_{k},v_{k}))v_{k}, \\[1ex]
				{k} \in \mathbb{I}_0^{M-1},
			\end{cases} \nonumber
		\end{align}
		and define $\mathbf{r}(\Theta)=\infty$ if Problem \eqref{eq:r_tracking} is infeasible.
		In \eqref{eq:r_tracking}, we compute an RCI set $\mathbb{X}(q)$ and a feasible input sequence $\mathbf{v}=(v_0,\cdots,v_{M-1})$ that makes the qLPV system track the reference while staying inside the RCI set. In Figure \ref{fig:rtheta_diag}, we present a schematic of the evaluation of $\mathbf{r}(\Theta)$. This formulation satisfies the requirements in \eqref{eq:r_reqmts}: if $\mathbb{Q}(\Theta)$ is empty, the value is $\infty$; if nonempty, a small value indicates that the open-loop system is capable of tight tracking, implying reduced conservativeness relevant to the control task.
		Problem \eqref{eq:r_tracking} can be adapted for multiple reference trajectories by considering a different open-loop sequence for each reference. Due to the qLPV dynamics, it is a nonlinear optimization problem. A convex alternative can be derived by replacing the qLPV dynamics with the uncertain LTI system in~\eqref{eq:multiplicative_RCI} and employing robust tube-based MPC techniques (e.g.,~\cite{Badalamenti2024}) to propagate the set-valued dynamics under additive disturbances. The development of such relaxations is left for future research. 
		\begin{figure}[t]
			\resizebox{\columnwidth}{!}{
				\begin{tikzpicture}[
					block/.style={
						draw, 
						rectangle, 
						thick, 
						align=center,
						minimum width=0.9cm,  
						minimum height=0.4cm  
					},
					line/.style={
						->, 
						>=Stealth, 
						thick
					},
					connector/.style={
						circle,
						fill,
						inner sep=1.5pt
					},
					node distance=1.75cm 
					]
					
					\node[block] (b1) {Eq. \eqref{eq:W_parameters_unique}}; 
					\node[block, right=of b1] (b2) {Eq. \eqref{eq:RCI_CC}}; 
					\node[block, right=of b2] (b3) {Eq. \eqref{eq:r_tracking}};
					
					
					\draw[line] ([xshift=-1.2cm]b1.west) -- node[midway, above] {$\Theta$} coordinate[pos=0.5] (split) (b1.west);
					
					\draw[line] ([yshift=0.6cm]b1.north) -- node[midway, right] {$\mathcal{D}^{\mathrm{w}}$} (b1.north);
					
					\draw[line] (b1.east) -- node[midway, above] {$(\mathrm{c}_{\mathrm{w}}, \epsilon_{\mathrm{w}}, \kappa)$} (b2.west);
					
					\draw[line] (b2.east) -- node[midway, above] {$\mathcal{Q}(\Theta)$} (b3.west);
					
					\draw[line] (b3.east) -- node[midway, above] {$\mathbf{r}(\Theta)$} +(1.cm, 0);
					
					
					\node[connector] at (split) {};
					
					\coordinate (bus_depth) at ([yshift=-0.6cm]b1.south);
					
					\draw[thick] (split) -- (split |- bus_depth);
					
					\draw[thick] (split |- bus_depth) -- (b3.south |- bus_depth);
					
					\draw[line] (b2.south |- bus_depth) -- (b2.south);
					
					\draw[line] (b3.south |- bus_depth) -- (b3.south);
					
				\end{tikzpicture}
			} 
			\caption{Evaluation of $\mathbf{r}(\Theta)$ given $\Theta$ and $\mathcal{D}^{\mathrm{w}}$.} 
			\label{fig:rtheta_diag}
		\end{figure}
		\section{Regularized system identification algorithm}
		\label{sec:sysID}
		Based on the aforementioned developments, we formulate the regularized system identification problem in \eqref{eq:regularized_sysID} as
		\begin{align}
			\label{eq:main_learning_problem}
			&\ \ \min_{\Theta,x_0} \ \frac{1}{N}\sum_{t=0}^{N-1} \|y_t - C x_t\|_2^2+\tau \mathbf{r}(\Theta) \\
			& \ \ \ \text{s.t.} \ 
			x_{t+1} = A(p(x_t,u_t))x_t+B(p(x_t,u_t))u_t, \ t \in \mathbb{I}_0^{N-1}, \nonumber
		\end{align}
		where $\Theta$ denotes the model parameters, $\mathbf{r}(\Theta)$ is defined in \eqref{eq:r_tracking}, and $\tau\geq 0$ is the regularization parameter. We recall that any value of $\Theta$ resulting in $\mathbb{Q}(\Theta)$ from \eqref{eq:RCI_set_parameters} being nonempty will lead to $\mathbf{r}(\Theta)<\infty$, such that the corresponding model can be used for constrained control synthesis for the plant. Since \eqref{eq:r_tracking} is itself a minimization problem, we denote its constraints as $c(\Theta,q,\mathbf{v}) \leq 0$, and append $(q,\mathbf{v})$ as optimization variables to Problem \eqref{eq:main_learning_problem} resulting in the formulation
		\begin{align}
			\label{eq:main_learning_problem_single}
			&\ \ \min_{\Theta,x_0,q,\mathbf{v}} \ \frac{1}{N}\sum_{t=0}^{N-1} \|y_t - C x_t\|_2^2+\tau \sum_{k=0}^M \|y_{k}^{\mathrm{r}}-Cz_k\|_2^2 \\
			& \ \ \ \ \  \text{s.t.} \ 
			\begin{cases}
				x_{t+1} = A(p(x_t,u_t))x_t+B(p(x_t,u_t))u_t, \\[1ex]
				z_{k+1} = A(p(z_k,v_k))z_k+B(p(z_k,v_k))v_k, \\[1ex]
				z_0=0, \ c(\Theta,q,\mathbf{v}) \leq 0, \  t \in \mathbb{I}_0^{N-1}, \  k \in \mathbb{I}_0^{M-1}.	
			\end{cases} \nonumber
		\end{align}
		While Problem \eqref{eq:main_learning_problem} can be solved using any nonlinear programming solver \cite{Nocedal2006}, the approach may become intractable if the datasets $\mathcal{D}$ or $\mathcal{D}^{\mathrm{w}}$ are large. Recall that $\mathcal{D}^{\mathrm{w}}$ is used to formulate the inequalities through \eqref{eq:W_parameters_unique} and \eqref{eq:RCI_CC} (See Figure \ref{fig:rtheta_diag}). Hence, we propose to instead penalize the inequality constraints using a soft penalty, resulting in
		\begin{align}
			\label{eq:main_learning_problem_penalized}
			&\min_{\Theta,x_0,q,\mathbf{v}} \ \frac{1}{N}\sum_{t=0}^{N-1} \|y_t - C x_t\|_2^2+\tau \sum_{k=0}^M \|y_{k}^{\mathrm{r}}-Cz_k\|_2^2\\
			& \hspace{130pt} +\tau_{\mathrm{c}} \|\max\{c(\Theta,q,\mathbf{v}),0\}^2\|_1, \nonumber
		\end{align}
		where $\tau_{\mathrm{c}}>0$ is the penalty parameter that must be chosen high enough to ensure satisfaction of the constraints. Note that the state trajectories $(x_0,\cdots,x_N)$ and $(z_0,\cdots,z_M)$ are implicit functions of $(x_0,\Theta)$ and $(\mathbf{v},\Theta)$ respectively, such that the equality constraints can be eliminated by condensing the dynamics. Since Problem \eqref{eq:main_learning_problem_penalized} is unconstrained, it can be tackled with solvers such as \textrm{Adam} \cite{kingma2017}, \textrm{L-BFGS} \cite{Byrd1995}, etc.
		While \eqref{eq:main_learning_problem_penalized} always admits a finite solution, large values indicate infeasibility of \eqref{eq:main_learning_problem_single}. {\color{black} Note that in general, Problem \eqref{eq:main_learning_problem_penalized} does not admit a unique solution, since it is a nonconvex optimization problem.} In the sequel, we present approaches to identify the scheduling variable order $n_p$, and matrix $F$ required to formulate Problem \eqref{eq:main_learning_problem_penalized}. For an approach to identify the model order $n_x$, we refer the reader to~\cite[Sec. II]{Bem24}.
	}
	\subsection{Scheduling variable order selection}
	\label{sec:np_selection}
	We present an approach to decide the scheduling variable order $n_p$ by solving an identification problem. Towards formulating this problem, we assume that the FNNs parameterizing the scheduling function in \eqref{eq:ss-p} are given by
	\begin{align}
		\label{eq:NN_split}
		\mathcal{N}_i(x,u):=W^L_i \hat{\mathcal{N}}_i(x,u)+b^L_i, && \forall i \in \mathbb{I}_1^{n_p-1}.
	\end{align}
	where $W^L_i \in \R^{1 \times n_{L}}$ and $b^L_i \in \R^{1}$ are weights and biases of the final layer of the FNNs, and $\hat{\mathcal{N}}_i:\R^{n_x+n_u}\to \R^{n_{L}}$ is the part of the FNN till the final linear layer. We note that the parameter vector $\Theta$ includes $\{(W^L_i,b^L_i),i \in \mathbb{I}_1^{n_p-1}\}$. Then, using the dataset $\mathcal{D}$, we formulate the problem
	\begin{align}
		\label{eq:np_id_problem}
		&\ \ \min_{\Theta,x_0} \ \frac{1}{N}\sum_{t=0}^{N-1} \|y_t - C x_t\|_2^2+\kappa_p \sum_{i=1}^{n_p-1} \|W^{L^{\top}}_i\|_2 \\
		& \ \ \ \text{s.t.} \ 
		x_{t+1} = A(p(x_t,u_t))x_t+B(p(x_t,u_t))u_t, \ t \in \mathbb{I}_0^{N-1}, \nonumber
	\end{align}
	where $\kappa_p>0$ is the regularization parameter.
	The regularization in \eqref{eq:np_id_problem} acts as a group-Lasso penalty, encouraging group sparsity in the vectors $W_i^L$. If $W_i^L$ is driven to zero, the corresponding FNN reduces to a constant bias $\mathcal{N}_i(x,u) = b^L_i$. In the following result, we  derive the reduced-order model under the assumption that the solution to \eqref{eq:np_id_problem} yields sparse weights. Without loss of generality, we assume the indices are ordered such that the non-zero weights appear first.
	\begin{proposition}
		\label{prop:model_reduction}
		Let the solution of~\eqref{eq:np_id_problem} be such that we have $W^L_{i} \neq 0$ for all $i \in \mathbb{I}_1^{\tilde{n}_p-1}$, and $W^L_{i}=0$ for all $i \in \mathbb{I}_{\tilde{n}_p}^{n_p-1}$. Let $\beta:=1+\sum_{i=\tilde{n}_p}^{n_p-1} e^{b^L_{i}}$ and $\tilde{\mathcal{N}}_i(x,u):=\mathcal{N}_i(x,u)-\log(\beta)$, 
		for all $i \in \mathbb{I}_1^{\tilde{n}_p-1}$. Then the system in~\eqref{eq:LPV_model} identified by~\eqref{eq:np_id_problem} can be equivalently expressed by the matrix functions
		\begin{align}
			\label{eq:reduced_matrix}
			(A(p(x,u)),B(p(x,u)))=\sum_{i=1}^{\tilde{n}_p} \tilde{p}_i(x,u)(\tilde{A}_i,\tilde{B}_i),
		\end{align}
		where $(\tilde{A}_i,\tilde{B}_i)=(A_i,B_i)$ for $i \in \mathbb{I}_1^{\tilde{n}_p-1}$ and
		\begin{align}
			&\scalemath{0.95}{
				(\tilde{A}_{\tilde{n}_p},\tilde{B}_{\tilde{n}_p}) = \frac{1}{\beta}\left(\sum_{i={\tilde{n}_p}}^{n_p-1}e^{b^L_i}(A_i,B_i) + (A_{n_p},B_{n_p})\right)},  \label{eq:AB_tilde_reg} 
		\end{align}
		and the scheduling function $\tilde{p}:\R^{n_x+n_u} \to \R^{\tilde{n}_p}$ with
		\begin{align}
			\tilde{p}_i(x,u) = \begin{cases} &\cfrac{e^{\tilde{\mathcal{N}}_i(x,u)}}{1+\sum_{j=1}^{\tilde{n}_p-1}e^{\tilde{\mathcal{N}}_j(x,u)}},\  i \in \mathbb{I}_1^{\tilde{n}_p-1},  \\
				&\cfrac{1}{1+\sum_{j=1}^{\tilde{n}_p-1}e^{\tilde{\mathcal{N}}_j(x,u)}}, \ i=\tilde{n}_p.
			\end{cases} \label{eq:p_tilde_reg}
		\end{align}
	\end{proposition}
	\begin{proof}
		By definition, for all $i \in \mathbb{I}_1^{\tilde{n}_p-1}$, we have
		\begin{align*}
			p_i(x,u) &= \frac{e^{\mathcal{N}_i(x,u)}}{\beta + \sum_{j=1}^{\tilde{n}_p-1} e^{\mathcal{N}_j(x,u)}}=\frac{e^{\tilde{\mathcal{N}}_i(x,u)}}{1 + \sum_{j=1}^{\tilde{n}_p-1} e^{\tilde{\mathcal{N}}_j(x,u)}},
		\end{align*}
		which corresponds to $\tilde{p}_i(x,u)$.
		Next, substituting $W^L_i=0$ for $i \in \mathbb{I}_{\tilde{n}_p}^{n_p-1}$ into the matrix function $A(p(x,u))$, we obtain
		\begin{align*}
			A(p(x,u)) &\scalemath{0.95}{= \sum_{i=1}^{\tilde{n}_p-1} A_i p_i(x,u) + \frac{\sum_{i=\tilde{n}_p}^{n_p-1} e^{b^L_i}A_i + A_{n_p}}{\beta + \sum_{j=1}^{\tilde{n}_p-1} e^{\mathcal{N}_j(x,u)}}} \\
			& \hspace{-0pt} \scalemath{0.95}{= \sum_{i=1}^{\tilde{n}_p-1} A_i \tilde{p}_i(x,u) +  \frac{\beta \tilde{A}_{\tilde{n}_p}}{\beta + \sum_{j=1}^{\tilde{n}_p-1} e^{\mathcal{N}_j(x,u)}}},
		\end{align*}
		where the term multiplying $\tilde{A}_{\tilde{n}_p}$ is $\tilde{p}_{\tilde{n}_p}(x,u)$. The derivation for $B(p(x,u))$ follows identically, concluding the proof.
	\end{proof}
	
	The regularization in \eqref{eq:np_id_problem} along with Proposition \ref{prop:model_reduction} present an approach to identify an equivalent qLPV model with scheduling variable order $\tilde{n}_p\leq n_p$. In some cases however, it might be desirable to reduce the order further. While this problem has been extensively studied in the literature (c.f.~\cite{olucha2024,Koelewijn2020} for a comprehensive review), the subsequent reduced-order functions might lose their simplex structure. Hence, we will now develop a procedure to reduce the order further by solving simple convex quadratic programs (QPs).
	
	\subsubsection*{A posteriori order reduction}
	\label{sec:post_processing} 
	Given a model with scheduling variable order $n_p$, the procedure in Proposition \ref{prop:model_reduction} relies on the fact that $W^L_i = 0$ forces the FNN output to a constant value $\mathcal{N}_i(x,u)=b_i^L$. This sparsity allows specific system matrices to be lumped together without loss of accuracy. We extend this intuition to an \textit{a posteriori} reduction procedure. By examining the scheduling trajectories $p(x_t,u_t)$ generated during simulation with the dataset $\mathcal{D}$, we can identify components that exhibit negligible variation. If $p_i(x_t,u_t)$ remains nearly constant for all $t \in \mathbb{I}_0^{N-1}$, the corresponding scheduling variable can be eliminated to derive an approximate reduced-order qLPV model. To formalize this intuition, we assume that the scheduling variables to be eliminated correspond to indices $i \in \mathbb{I}_{\tilde{n}_p}^{n_p-1}$ with $\tilde{n}_p<n_p$, and denote the reduced-order qLPV model that approximates \eqref{eq:LPV_model} as
	\begin{align}
		\label{eq:reduced_order_qLPV}
		\tilde{x}^+=\tilde{A}(\tilde{p}(\tilde{x},u))\tilde{x}+\tilde{B}(\tilde{p}(\tilde{x},u))u, && \tilde{y}=\tilde{C}\tilde{x},
	\end{align}
	where the matrix-valued functions are parameterized as
	\begin{align}
		\label{eq:reduced_order_linear_parameterization}
		(\tilde{A}(\tilde{p}), \tilde{B}(\tilde{p})) := \sum_{i=1}^{\tilde{n}_p}\tilde{p}_i(\tilde{A}_i, \tilde{B}_i),
	\end{align}
	and each component of the reduced-order scheduling function $\scalemath{0.95}{\tilde{p} : \R^{n_x + n_u} \to \R^{\tilde{n}_p}}$ is parameterized as
	\begin{align}
		\tilde{p}_i(\tilde{x},u) := \scalemath{0.95}{\begin{cases} &\cfrac{e^{W^L_i\hat{\mathcal{N}}_i(\tilde{x},u)}e^{\tilde{b}^L_i}}{1+\sum_{j=1}^{\tilde{n}_p-1}e^{W^L_j\hat{N}_j(\tilde{x},u)}e^{\tilde{b}^L_j}},\  i \in \mathbb{I}_1^{\tilde{n}_p-1},  \\
				&\cfrac{1}{1+\sum_{j=1}^{\tilde{n}_p-1}e^{W^L_j \hat{N}_j(\tilde{x},u)}e^{\tilde{b}^L_j}}, \ i=\tilde{n}_p.
		\end{cases}} \label{eq:reduced_order_ss-p}
	\end{align}
	The parameters of \eqref{eq:reduced_order_qLPV} are $\{\tilde{A}_i,\tilde{B}_i,i \in \mathbb{I}_1^{\tilde{n}_p}\}$, $\tilde{C}$, and  biases $\{\tilde{b}^L_i,i \in \mathbb{I}_1^{\tilde{n}_p-1}\}$ of the last layers of the FNNs, with the weights of the previous layers retained the same as \eqref{eq:LPV_model}. The reduction procedure computes these parameters to achieve
	\begin{align}
		\label{eq:reduction_goal}
		\hspace{-5pt} \scalemath{0.935}{(A(p(x_t,u_t)),B(p(x_t,u_t))) \approxeq (\tilde{A}(\tilde{p}(\tilde{x}_t,u_t)),\tilde{B}(\tilde{p}(\tilde{x_t},u_t)))},
	\end{align}
	where $x_t$ and $\tilde{x}_t$ are the state trajectories of \eqref{eq:LPV_model} and \eqref{eq:reduced_order_qLPV} when simulated from the same initial state, e.g., obtained by solving Problem \eqref{eq:np_id_problem}, using inputs from the dataset $\mathcal{D}$. To this end, we make the following assumption.
	\begin{assumption}
		\label{ass:q_eliminated}
		The components $e^{W^L_i\hat{\mathcal{N}}_i(x_t,u_t)}$ for $i \in \mathbb{I}_{\tilde{n}_p}^{n_p-1}$ exhibit negligible variation over the dataset $\mathcal{D}$.
	\end{assumption}
	
	Denoting $(A_t,B_t)=(A(p(x_t,u_t)),B(p(x_t,u_t)))$ and using the change of variables $\tilde{\mathrm{b}}_i=e^{\tilde{b}^L_i}$,
	we aim to achieve \eqref{eq:reduction_goal} under Assumption \ref{ass:q_eliminated} by solving
	\begin{align}
		\hspace{-5pt}
		\label{eq:ideal_reduction}
		\scalemath{0.94}{
			\min_{\tilde{A}_i,\tilde{B}_i,\tilde{\mathrm{b}}_i>0} \sum_{t=0}^{N-1} \left\|\begin{matrix*}[l]
				&\hspace{-10pt} (A_t,B_t)  -(\tilde{A}(\tilde{p}(\tilde{x}_t,u_t)),\tilde{B}(\tilde{p}(\tilde{x}_t,u_t)))
			\end{matrix*}\right\|_2^2,}
	\end{align}
	with the biases recovered as $\tilde{b}^L_i = \log(\tilde{\mathrm{b}}_i)$ from the optimizer. Unfortunately, Problem \eqref{eq:ideal_reduction} can be as difficult to solve as a system identification problem because of the nonlinear dynamics in \eqref{eq:reduced_order_qLPV}. To ameliorate this difficulty, we assume that
	\begin{align}
		\label{eq:second_assumption}
		W^L_i \hat{\mathcal{N}}_i(x_t,u_t) \approxeq W^L_i \hat{\mathcal{N}}_i(\tilde{x}_t,u_t),
	\end{align}
	for $i \in \mathbb{I}_1^{\tilde{n}_p-1}$ and $t \in \mathbb{I}_0^{N-1}$. Note that \eqref{eq:second_assumption} might be achieved either because $W^L_i \approxeq 0$, or the parameters $\Theta$ of \eqref{eq:LPV_model} along with structure of the FNNs. Either way, defining 
	\begin{align}
		\label{eq:qb_defn}
		(q_{it},\mathrm{\tilde{b}}_i) := \begin{cases} \left(e^{W^L_i \hat{\mathcal{N}}_i(x_t,u_t)},e^{\tilde{b}^L_i} \right), i \in \mathbb{I}_1^{\tilde{n}_p-1} \\
			(1,1), i =\tilde{n}_p, \end{cases} 
	\end{align}
	for $t \in \mathbb{I}_0^{N-1}$,
	we approximate \eqref{eq:ideal_reduction} as
	\begin{align}
		\label{eq:approx_reduction_1}
		\scalemath{0.95}{
			\min_{\tilde{A}_i,\tilde{B}_i,\tilde{\mathrm{b}}_i>0} \sum_{t=0}^{N-1} \left\|\begin{matrix*}[l]
				&\hspace{-10pt} \cfrac{\sum_{j=1}^{\tilde{n}_p}{q}_{jt} \tilde{\mathrm{b}}_j (A_t,B_t)  -\sum_{i=1}^{\tilde{n}_p}{q}_{it} \tilde{\mathrm{b}}_i (\tilde{A}_i,\tilde{B}_i)}{\sum_{j=1}^{\tilde{n}_p}{q}_{jt} \tilde{\mathrm{b}}_j}
			\end{matrix*}\right\|_2^2}
	\end{align}
	in which we exploit \eqref{eq:second_assumption} to replace the components $e^{W^L_i \hat{\mathcal{N}}_i(\tilde{x}_t,u_t)}$ in \eqref{eq:reduced_order_ss-p} with their data-based approximations $e^{W^L_i \hat{\mathcal{N}}_i(x_t,u_t)}$. While Problem \eqref{eq:approx_reduction_1} is still a nonlinear programing problem, we first note that the denominator in the objective is larger than $1$ because of \eqref{eq:qb_defn}, specifically at $j=\tilde{n}_p$. Hence, we can eliminate it from the objective to minimize an upper bound. 
	Secondly, using the change of variables
	\begin{align}
		\label{eq:change_of_variables}
		(\hat{A}_i,\hat{B}_i):=\tilde{\mathrm{b}}_i(\tilde{A}_i,\tilde{B}_i), && i \in \mathbb{I}_1^{\tilde{n}_p},
	\end{align}
	Problem \eqref{eq:approx_reduction_1} can be approximated by the QP
	\begin{align}
		\label{eq:approx_reduction_2}
		\min_{\hat{A}_i,\hat{B}_i,\tilde{\mathrm{b}}_i>0} \sum_{t=0}^{N-1} \left\|\begin{matrix*}[l]
			&\hspace{-10pt} \sum_{j=1}^{\tilde{n}_p} {q}_{jt} \tilde{\mathrm{b}}_j (A_t,B_t)  -\sum_{i=1}^{\tilde{n}_p} {q}_{it}  (\hat{A}_i,\hat{B}_i)
		\end{matrix*}\right\|_2^2.
	\end{align}
	Hence, we propose to solve Problem \eqref{eq:approx_reduction_2}, and recover the system matrices $(\tilde{A}_i,\tilde{B}_i)$ as in \eqref{eq:change_of_variables} and biases as $\tilde{b}^L_i = \log(\tilde{\mathrm{b}}_i)$. Then, using the same initial state, we simulate the dynamics in \eqref{eq:reduced_order_qLPV}, and estimate matrix $\tilde{C}$ by solving the QP
	\begin{align}
		\label{eq:approx_reduction_3}
		\min_{\tilde{C}} \sum_{t=0}^{N-1} \left\|\begin{matrix*}[l]
			&\hspace{-10pt} y_t -\tilde{C}\tilde{x}_t
		\end{matrix*}\right\|_2^2,
	\end{align}
	where $\tilde{x}_t$ is the state of \eqref{eq:reduced_order_qLPV} with the newly estimated matrices using inputs in the dataset $\mathcal{D}$, and $y_t$ the corresponding plant outputs in the same dataset. Thus, by solving the QPs in \eqref{eq:approx_reduction_2} and \eqref{eq:approx_reduction_3}, we identify a qLPV system with scheduling variable order $\tilde{n}_p<n_p$ to approximate \eqref{eq:LPV_model}.
	\begin{remark}
		While Problem \eqref{eq:approx_reduction_2} is derived for a single dataset $\mathcal{D}$, it can straightforwardly be extended to accommodate multiple datasets.
	\end{remark}
	\begin{remark}
		Scheduling variable order reduction can also be achieved by observing the variance of matrices $(A_i,B_i)$. If two such pairs are the same, then they can be lumped together. The development of such techniques is a subject of future study.
	\end{remark}
	\subsection{Parameterization of the RCI set}
	We present an approach to compute a matrix $F$ given initial model parameters $\Theta$ such that the set $\mathbb{Q}(\Theta)$ in \eqref{eq:RCI_set_parameters} is nonempty. We note that these parameters $\Theta$ are not expected to capture the plant dynamics well. Rather, they are used to compute a matrix $F$ that characterizes the RCI set $\mathbb{X}(q)$, and will be refined later by solving Problem \eqref{eq:main_learning_problem_penalized}.
	This is a standard approach in many data-driven robust control synthesis procedures, e.g., \cite{Chen2022,Mejari2024}, etc. For given $\Theta$, there exist many techniques to compute an initial matrix $F$, e.g., \cite{Pluymers2005}. The approach we adopt, however, explicitly permits the user to trade-off complexity for conservativeness. 
	
	Given $\Theta$, we first compute the disturbance set parameters $(c_{\mathrm{w}},\epsilon_{\mathrm{w}})$ as in \eqref{eq:W_parameters_unique}. We remark that since $\Theta$ is expected to induce a large prediction error, the set $\mathbb{W}_{\kappa}(c_{\mathrm{w}},\epsilon_{\mathrm{w}})$ might be large.
	Then, we assume to be given a matrix $\tilde{F} \in \R^{\fe \times n_x}$ that characterizes a polytope with $\ve$ vertices as
	\begin{align}
		\label{eq:tilde_F}
		\{x \in \R^{n_x} : \tilde{F}x \leq 1\}=\mathrm{CH}\{\tilde{x}_l, l \in \mathbb{I}_1^{\ve}\},
	\end{align}
	Typically, $\tilde{F}$ is chosen such that $\{x \in \R^{n_x}:\tilde{F}x\leq 1\}$ is the $\infty$-norm box in $\R^{n_x}$ with $\fe = 2n_x$ and $\ve=2^{n_x}$. The following result from \cite{mulagaleti2023} can be used to compute a matrix $F$ characterizing the RCI set $\mathbb{X}(q)$ for the uncertain LTI system in \eqref{eq:multiplicative_RCI}, and consequently the uncertain qLPV system in \eqref{eq:uncertain_qLPV} as per Proposition \ref{prop:RCI_transfer}.
	\begin{proposition}
		\label{prop:init_RCI}
		Suppose there exists an invertible matrix $\Sigma \in \R^{n_x \times n_x}$ and vertex control inputs $\mathrm{u}:=(\mathrm{u}_1,\cdots,\mathrm{u}_{\ve})$ satisfying the nonlinear inequalities
		\begin{subequations}
			\label{eq:RCI_init}
			\begin{align}
				\hspace{-3pt}
				\scalemath{0.99}{\tilde{F}\Sigma^{-1}(A_i \Sigma \tilde{x}_l + B_i \mathrm{u}_l+K_i c_{\mathrm{w}}) + \kappa|F\Sigma^{-1}K_i|\epsilon_{\mathrm{w}} \leq 1,} \label{eq:RCI_init_1}\\
				\scalemath{0.99}{ H^y (C \Sigma \tilde{x}_l + c_{\mathrm{w}})+ \kappa|H^y| \epsilon_{\mathrm{w}} \leq h^y,} \label{eq:RCI_init_2} \\
				\scalemath{0.99}{H^u \mathrm{u}_l \leq h^u,} \label{eq:RCI_init_3}
			\end{align}
		\end{subequations}
		for all $i \in \mathbb{I}_1^{n_p}$ and $l \in \mathbb{I}_1^{\ve}$. Then, the set $\mathbb{X}(1)$ defined with $F=\tilde{F}\Sigma^{-1}$ is RCI for the uncertain LTI system \eqref{eq:multiplicative_RCI}.
	\end{proposition}
	
	Following Proposition \ref{prop:init_RCI} and recalling that our goal is to synthesize controllers that track output reference signals of the form $\{y^{\mathrm{r}}_k \in \R^{n_y},k \in \mathbb{I}_0^{M}\}$, we propose to compute $\Sigma$ by solving the nonlinear programming problem
	\begin{align}
		\label{eq:NLP_init}
		& \ \min_{\Sigma, \mathrm{u}, \mathbf{v}} \ \sum_{k=0}^M \|y_k^{\mathrm{r}}-Cz_k\|_2^2 \\
		& \ \ \text{s.t.} \ 
		\begin{cases}
			\eqref{eq:RCI_init},\  z_0 = 0, \ v_k \in \mathbb{U}, \ \tilde{F}\Sigma^{-1}z_k \leq 1,\\[1ex]
			z_{k+1} = A(p(z_k,v_k))z_k + B(p(z_k,v_k))v_k, \ k \in \mathbb{I}_0^{M-1},
		\end{cases} \nonumber
	\end{align}
	formulated in a similar way as \eqref{eq:r_tracking}. We remark that Problem \eqref{eq:NLP_init} simplifies significantly if the initial model is LTI, i.e., $(A_i,B_i,K_i)=(\bar{A},\bar{B},\bar{K})$ for all $i$. In such cases, the constraints in \eqref{eq:RCI_init} need only be enforced for a single system, and the dynamics constraint in \eqref{eq:NLP_init} reduces to $\bar{x}^+=\bar{A}\bar{x}+\bar{B}\bar{u}$. Assuming that \eqref{eq:NLP_init} is feasible, the matrix $F$ is computed as $F=\tilde{F}\Sigma^{-1}$, following which the configuration-constraint matrix $E$ and vertex maps $\{V_l,l \in \mathbb{I}_1^{\ve}\}$ are computed following \cite[Theorem 2]{Villaneuva2024}.
	
	Given matrix $\tilde{F}$ satisfying \eqref{eq:tilde_F}, we now present a heuristic approach to compute system parameters $\Theta$ that encourage feasibility of Problem \eqref{eq:NLP_init}. We attempt to compute model parameters $\Theta$ such that $\Sigma=I_{n_x}$, i.e, the identity matrix, satisfies \eqref{eq:RCI_init}. To this end, we formulate the problem
	\begin{align}
		\label{eq:init_learning_problem}
		&\ \ \min_{\Theta,x_0,\mathrm{u}} \ \frac{1}{N}\sum_{t=0}^{N-1} \|y_t - C x_t\|_2^2\\
		& \ \ \ \ \  \text{s.t.} \ 
		\begin{cases}
			x_{t+1} = A(p(x_t,u_t))x_t+B(p(x_t,u_t))u_t, \\[1ex]
			\tilde{F}(A_i \tilde{x}_l + B_i \mathrm{u}_l+K_i c_{\mathrm{w}}) + \kappa|FK_i|\epsilon_{\mathrm{w}} \leq 1, \\[1ex]
			H^y (C \Sigma \tilde{x}_l + c_{\mathrm{w}})+ \kappa|H^y| \epsilon_{\mathrm{w}} \leq h^y, \\[1ex]
			H^u \mathrm{u}_l \leq h^u, i \in \mathbb{I}_1^{n_p}, l \in \mathbb{I}_1^{\ve}, t \in \mathbb{I}_1^{N-1}.
		\end{cases} \nonumber
	\end{align}
	Note that Problem \eqref{eq:init_learning_problem} is formulated be posing the conditions in \eqref{eq:RCI_init} formulated with $\Sigma=I_{n_x}$ as constraints. To solve it, we relax the inequality constraints with a large penalty parameter. In practice, this approach was observed to compute models $\Theta$ that render Problem \eqref{eq:NLP_init} feasible.
	
	\begin{algorithm}[t]
		\caption{Procedure to solve Problem~\eqref{eq:main_learning_problem}}\label{alg:procedure}
		\begin{algorithmic}[1]
			\Require Dataset $\mathcal{D}$ and $\mathcal{D}^{\mathrm{w}}$, robustness parameter $\kappa$
			\State Solve Problem \eqref{eq:np_id_problem} to estimate $n_p$, label optimal model parameters as $\Theta^1$.
			\State (Optional) Solve Problem \eqref{eq:approx_reduction_2} to reduce $n_p$ further. 
			\State Solve Problem \eqref{eq:init_learning_problem} with relaxed constraints, initialized at $\Theta^1$. Label optimal model parameters a as $\Theta^2$.
			\State Solve Problem \eqref{eq:NLP_init} with $\Theta=\Theta^2$ to compute $F$. 
			\State Solve Problem \eqref{eq:main_learning_problem_penalized} initialized at $\Theta^2$. Label optimal model parameters as $\Theta^3$.
			\State If $\mathbb{Q}(\Theta^3)$ is nonempty, return $\Theta^3$. Else, re-solve Problem \eqref{eq:main_learning_problem_penalized} with larger $\tau_{\mathrm{c}}$.
		\end{algorithmic}
	\end{algorithm}
	
	\subsection{qLPV SysID with RCI regularization}
	\label{sec:concurrent_qLPV_identification}
	In Algorithm \ref{alg:procedure}, we outline the procedure to formulate and solve Problem \eqref{eq:main_learning_problem}. Firstly, we follow the procedure in Section \ref{sec:np_selection} to identify the scheduling variable order $n_p$. Then, we solve Problem \eqref{eq:init_learning_problem} with relaxed constraints, using a solver initialized with the solution of the previous step. We use the computed model parameters $\Theta^2$ to solve the nonlinear programming problem \eqref{eq:NLP_init} to compute $F$, using which we compute matrices $E$ and $V$ required to formulate the conditions in \eqref{eq:RCI_CC} and characterize the set $\mathcal{Q}(\Theta)$. Note that if Problem \eqref{eq:NLP_init} is feasible with $\Theta=\Theta^2$, then $\mathcal{Q}(\Theta^2)$ is nonempty. Finally, we solve Problem \eqref{eq:main_learning_problem_penalized}, with the solver initialized at $\Theta^2$. We choose the penalty parameter $\tau_c$ to be large enough that the constraints of Problem \eqref{eq:main_learning_problem} are satisfied at the solution, or equivalently, the set $\mathbb{Q}(\Theta^3)$ is nonempty thus solving Problem \ref{prob:basic_problem}.
	
	{\color{black}
		\subsection{Challenges and limitations}
		We identify the following challenges in the implementation of Algorithm \ref{alg:procedure}.
		\subsubsection{Trajectory Length} Using datasets $\mathcal{D}$ with long trajectories (large $N$) can make the identification problem difficult to solve due to the instability of the dynamics during optimization \cite{Ribeiro2020}. For our qLPV parameterization in \eqref{eq:LPV_model}, a reasonable heuristic to mitigate this involves first identifying a stable LTI system $x^+=Ax+Bu$ that best fits the data, and initializing  $(A_i,B_i)=(A,B)$ for all $i \in \mathbb{I}_1^{n_p}$ such that the qLPV system behaves as a stable LTI system for all $p \in \mathcal{P}$.
		\subsubsection{Robustness Parameter} Selecting the disturbance set parameter $\kappa$ involves a trade-off. As discussed previously, while larger values yield more robust models, they may render the RCI set synthesis infeasible. 
		\subsubsection{Initial RCI Set} While the model identified by solving Problem \eqref{eq:init_learning_problem} with relaxed constraints was generally observed to result in feasibility of Problem \eqref{eq:NLP_init} to compute the matrix $F$, this is not guaranteed. Furthermore, Problem \eqref{eq:init_learning_problem} might fail simply due to unstable and uncontrollable modes in the data, rendering the original Problem \eqref{prob:basic_problem} inherently infeasible. Distinguishing between numerical infeasibility and this fundamental structural infeasibility remains a subject for future study.	From a numerical standpoint, feasibility can be improved by choosing a more expressive template matrix $\tilde{F}$. However, this may result in a polytope $\mathbb{X}(q)$ with a large number of vertices, significantly increasing the complexity of the set $\mathbb{Q}(\Theta)$ and Problem \eqref{eq:main_learning_problem}. An alternative is to co-synthesize a linear feedback controller for the uncertain LTI system in \eqref{eq:multiplicative_RCI} to obtain a low-complexity RCI set, though this typically reduces control authority. 
		\subsubsection{Scalability} Since our approach relies on polytopic RCI sets, scalability with the state dimension $n_x$ remains a concern, since the complexity of a polytope in terms of number of facets and vertices can increase exponentially with the dimension. An approach to tackle this complexity is to carefully compute the template matrix $F$ using a structured approach, such as that presented in \cite{Badalamenti2025}.
	}

	\section{Numerical results}
	\label{sec:numerical_examples}
	\subsection{Effectiveness of the qLPV parameterization}
	We want to illustrate the effectiveness of the proposed parameterizations for system identification and control synthesis through numerical examples. First, we want to show the effectiveness of the qLPV parameterization, in which we solve Problem~\eqref{eq:main_learning_problem} with the constraints ignored, i.e.,
	\begin{align}
		\label{eq:SysID_qLPV_examples}
		&\min_{\Theta,x_0} \ \frac{1}{N}\sum_{t=0}^{N-1} \|y_t - C x_t\|_2^2 \\
		& \ \ \text{s.t.} \  x_{t+1}=A(p(x_t,u_t))x_t+B(p(x_t,u_t))u_t, \ t \in \mathbb{I}_0^{N-1}. \nonumber
	\end{align}
	We solve Problem~\eqref{eq:SysID_qLPV_examples} using the \url{jax-sysid} library~\cite{Bem24} with scaled data $u \leftarrow (u-\mu_u)/\sigma_u$ and $y \leftarrow (y-\mu_y)/\sigma_y$, where $\mu_u,\mu_y$ and $\sigma_u,\sigma_y$ are the empirical mean and standard deviation of the inputs and outputs in the Training dataset.
	The quality of fit is measured using the best fit ratio (BFR)
	\begin{align*}
		\scalemath{0.95}{
			\text{BFR} := \max\left\{0,1-\sqrt{\frac{\|y_t-\hat{y}_t\|^2_2}{\|y_t-(1/N)\sum_{t=0}^{N-1} y_t\|^2_2}}\right\}\times 100,}
	\end{align*}
	where $y_t$ and $\hat{y}_t$ are the measured and predicted outputs respectively. When evaluating BFR, we use the entire dataset to estimate the initial state of the qLPV model. For all examples, we solve Problem~\eqref{eq:SysID_qLPV_examples} by first running $2000$ Adam iterations~\cite{kingma2017}, followed by a maximum of $5000$ iterations of the L-BFGS-B solver~\cite{Byrd1995}. We start the optimization with a randomly initialized $\Theta$ \footnote{Corresponding code found on \url{https://github.com/samku/qLPV_concurrent_identification}}.

	\subsubsection{Trigonometric system} 
	\label{ex1:qLPV}
	We consider the problem of identifying a qLPV model of the nonlinear plant
	\begin{align*}
		z^+ &= \begin{bmatrix}
			a_1\sin(z_1)+b_1\cos(0.5z_2)u \\
			a_2\sin(z_1+z_3)+b_2\arctan(z_1+z_2) \\
			a_3e^{-z_2} + b_3 \sin(-0.5z_1)u
		\end{bmatrix} + q_x, \nonumber  \\
		y &= \arctan(c_1z_1^3) + \arctan(c_2z_2^3) + \arctan(c_3 z_3^3) + q_y 
	\end{align*}
	where the additive noise terms $q_x \sim 0.01\mathcal{N}^3$, $q_y \sim 0.01 \mathcal{N}$, and $a = [0.5,0.6,0.4]$, $b = [1.7,0.4,0.9]$, $c = [2.2,1.8,-1]$. The training and test datasets, consisting of $5000$ points each, are built by exciting the system using inputs $u$ uniformly distributed in $[-0.5,0.5]$. To identify a qLPV model, we solve Problem~\eqref{eq:SysID_qLPV_examples} with $n_x = 3$, scheduling variable dimension $n_p \in \{1,2,3,4\}$, and number of hidden layers $n_h \in \{1,2\}$ for each FNN $\mathcal{N}_i(x,u)$ defining the scheduling function as in~\eqref{eq:ss-p}. 
	We use $6$ $\mathrm{swish}$ activation units in each layer. 
	The obtained BFR scores for $n_p=1$, which corresponds to identifying an LTI system with $n_x=3$ are $67.962$ over the training set and $68.449$ over the test set, indicating a poor fit quality. 
	For qLPV identification, the BFR scores are shown in Table~\ref{trigonometric}. 
	We observe that using the proposed qLPV parameterization improves the fit quality. 
	
	In order to study the effect of group Lasso regularization in \eqref{eq:np_id_problem} for reducing the scheduling variable order, we solve Problem \eqref{eq:np_id_problem} with $n_x=2$, $n_p=10$, $n_h=1$, and the FNN $\mathcal{N}_i(x,u)$ parameterized with $3$ $\mathrm{swish}$ activation units. In Figure \ref{fig:glasso_effect}, we plot the BFR scores and reduced scheduling variable order as $\kappa_p$ increases. We report that $\tilde{n}_p=10$ for all $\kappa_p \in [0,10^{-6}]$. For $\kappa_p \geq 2.75 \times 10^{-6}$, we obtain $\tilde{n}_p=1$, such that the qLPV system reduces to an LTI system.
	\begin{table}
		\centering
		\begin{tabularx}{\columnwidth}{|Y|Y|Y|Y|Y|}
			\hline
			& \multicolumn{2}{c|}{$n_h=1$} & \multicolumn{2}{c|}{$n_h=2$} \\
			\cline{2-5}
			$n_p$ & Training & Test & Training & Test \\
			\hline
			2 & 93.426 & 94.060 & 94.259 & 93.262 \\
			\hline
			3 & 95.154 & 96.022 & 95.995 & 95.223 \\
			\hline
			4 & 95.978 & 96.331 & 96.120 & 96.143 \\
			\hline 
		\end{tabularx}
		\caption{BFR scores for the trigonometric system with $n_x=3$. \label{trigonometric}}
	\end{table}
	\begin{figure}
		\centering
		\includegraphics[width=1.\linewidth, clip=true]{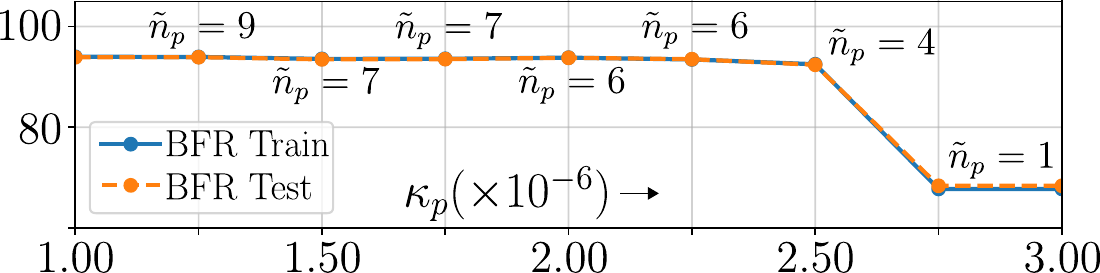}
		\captionsetup{width=\linewidth,justification=justified, singlelinecheck=false}
		\caption{Variation of BFR with group Lasso regularization parameter $\kappa_{\mathrm{p}}$ in Problem~\eqref{eq:np_id_problem} with $n_p=10$, along with reduced scheduling variable $\tilde{n}_p$ order following Proposition \ref{prop:model_reduction}. We obtain $\tilde{n}_p=10$ for $\kappa_p \in [0,10^{-6}]$ and $\tilde{n}_p=1$ for $\kappa_p \geq 2.75 \times 10^{-6}$.}
		\label{fig:glasso_effect}
	\end{figure}
	\subsubsection{Benchmark SysID examples}
	We solve Problem~\eqref{eq:SysID_qLPV_examples} using datasets generated by benchmark systems. The BFR scores are shown in Table~\ref{benchmark}. In all examples, we set $n_x=4$ and $n_p=3$, and parameterize each FNN $\mathcal{N}_i(x,u)$ with a single hidden later containing $6$ $\mathrm{swish}$ activation units. Further, we append the regularization $1e^{-4} \|\Theta\|_2^2$ to improve generalization quality of the identified model. The datasets are indicated as
	\begin{itemize}
		\item
		$[\mathrm{2T}]$  Two-tank system from~\cite{MathWorks2023TwoTank}, with $2300$ training and $700$ test data. 
		\item
		$[\mathrm{SB}]$  Silverbox system from~\cite{Ljung2020}, with $7950$ training and $1023$ test data. 
		\item $[\mathrm{HW}]$ Hammerstein-Wiener system from~\cite{schoukens2009wiener}, with $75000$ training and $108000$ test data. 
		\item
		$[\mathrm{MR}]$ Magneto-Rheological Fluid Damper system from~\cite{Wang2009}, with $2000$ training and $1499$ 
		test data. 
	\end{itemize}
	\begin{table}
		\centering
		\begin{tabularx}{\columnwidth}{|Y|Y|Y|Y|Y|}
			\hline
			& \multicolumn{2}{c|}{$n_p=1$} & \multicolumn{2}{c|}{$n_p=3$} \\
			\cline{2-5}
			& Training & Test & Training & Test \\
			\hline
			$\mathrm{2T}$ & 78.820 & 83.336 & 97.152 & 96.709 \\
			\hline
			$\mathrm{SB}$ & 78.904 & 50.201 & 98.690 & 98.419 \\
			\hline
			$\mathrm{HW}$ & 82.449 & 82.111 & 95.939 & 95.089 \\
			\hline 
			$\mathrm{MR}$ & 56.231 & 49.771 & 92.717 & 91.464 \\
			\hline 
		\end{tabularx}
		\caption{BFR scores for the considered benchmark examples. \label{benchmark}}
	\end{table}
	In Table \ref{benchmark}, we also indicate the BFR scores when identifying an LTI model, i.e., with $n_p=1$. Observe that the qLPV parameterization achieves a good predictive performance.
	\begin{figure}
		\centering
		\includegraphics[width=1.\linewidth, clip=true]{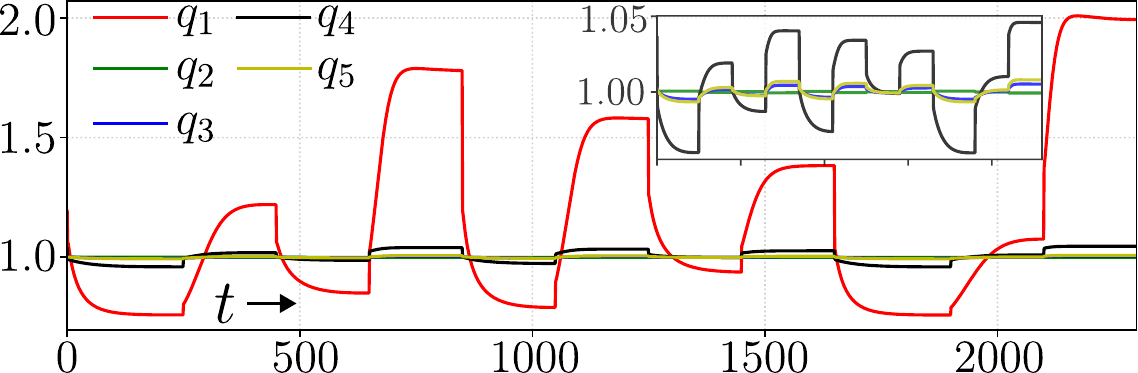}
		\captionsetup{width=\linewidth,justification=justified, singlelinecheck=false}
		\caption{Trajectories of $q_{it}$ defined in \eqref{eq:qb_defn} for the Two-tank system simulated with the training dataset. Observe negligible variation in all except the trajectories of $q_1$, thus verifying Assumption \ref{ass:q_eliminated}. The smaller plot is zoomed in on the trajectories of the remaining components.}
		\label{fig:postprocessing}
	\end{figure}
	
	For the two-tank system ($\mathrm{2T}$), we now illustrate the a posteriori scheduling order reduction procedure proposed in Section \ref{sec:post_processing}. We consider the over-parameterized qLPV model with $n_x=4$ and $n_p=6$, with each FNN parameterized with a single hidden layer containing $6$ $\mathrm{swish}$ activation units. 
	With this model, we obtain BFR scores of $94.194$ and $92.625$ over the training and test sets respectively. Towards reducing the scheduling variable order of this model, we plot the values of $q_{it}$ defined in \eqref{eq:qb_defn} when simulated using the dataset $\mathcal{D}$ in Figure \ref{fig:postprocessing}. We observe that $q_{1}$ exhibits the largest variation over the dataset, with the others being almost constant thus verifying Assumption \ref{ass:q_eliminated}. Hence, we eliminate those that are almost constant and estimate the parameters of the reduced-order model in \eqref{eq:reduced_order_qLPV} by solving the QPs \eqref{eq:approx_reduction_2} and \eqref{eq:approx_reduction_3}. In Table \ref{tab:bfr_reduction}, we see the BFR scores of the reduced order models, with $\iota$ denoting the eliminated indices. As expected, upto $3$ orders can be eliminated without any degradation in the model quality.
	
	\begin{table}[htbp]
		\centering
		\renewcommand{\arraystretch}{1.2}
		\setlength{\tabcolsep}{3pt} 
		\begin{tabularx}{\columnwidth}{|X|c|c||X|c|c|}
			\hline
			$\iota$ & Training & Test & $\iota$ & Training & Test \\
			\hline
			\hline
			$\{2\}$       & 94.196 & 92.599 & $\{2,3\}$     & 94.197 & 92.586 \\
			\hline
			$\{2,3,5\}$   & 94.160 & 92.741 & $\{2,3,5,4\}$ & 88.930 & 90.005 \\
			\hline
		\end{tabularx}
		\caption{BFR values for reduced order qLPV model.}
		\label{tab:bfr_reduction}
	\end{table}

	\subsection{SysID with controller synthesis guarantees}
	We consider data generated by the nonlinear spring-mass-damper system with dynamics
	\begin{align*}
		\mathrm{m}_{[1]}\ddot{\mathrm{x}}_{[1]}&=10 u-\mathrm{k}^{\mathrm{s}}(\mathrm{x}_{[1]})-\mathrm{k}^{\mathrm{d}}(\dot{\mathrm{x}}_{[1]})-\mathrm{k}^{\mathrm{s}}(\delta \mathrm{x})-\mathrm{k}^{\mathrm{d}}(\dot{ \delta \mathrm{x}}), \\
		\mathrm{m}_{[2]}\ddot{\mathrm{x}}_{[2]}&=-\mathrm{k}^{\mathrm{s}}(\mathrm{x}_{[2]})-\mathrm{k}^{\mathrm{d}}(\mathrm{x}_{[2]})+\mathrm{k}^{\mathrm{s}}(\delta \mathrm{x})+\mathrm{k}^{\mathrm{d}}(\dot{ \delta \mathrm{x}}),
	\end{align*}
	where $(\mathrm{x}_{[1]},\mathrm{x}_{[2]})$ are the positions of the masses. The input $u$ is the force applied on the first mass, and the output is the position of the second mass. We denote the relative position of the masses as $\delta \mathrm{x}=\mathrm{x}_{[1]}-\mathrm{x}_{[2]}$, and the spring and damper reaction forces are $\mathrm{k}^{\mathrm{s}}(x)=\mathrm{a}x+\mathrm{b}x^3$ and $\mathrm{k}^{\mathrm{d}}(v)=\mathrm{d}v+\mathrm{e} \cdot \mathrm{tanh}(v/\mathrm{v}_\mathrm{0})$ respectively. The plant parameters are $(m_{[1]},m_{[2]},\mathrm{a},\mathrm{b},\mathrm{d},\mathrm{e},\mathrm{v}_\mathrm{0})=(0.25,0.1,1,1,0.5,0.5,0.01)$ in appropriate units. The system in simulated from the origin using the Runge-Kutta integrator (Tsit5) implemented in the $\mathrm{Diffrax}$ library \cite{diffrax}, with a time-step of $0.02$s using inputs $u \in \mathbb{U}=[-1,1]$. We excite the system using a combination of multisine and random inputs in this range, using which we build the training dataset $\mathcal{D}$, observer dataset $\mathcal{D}^{\mathrm{w}}$ and test dataset, each with $5000$ points. For all these sets, we initialize the plant at the origin. Using the outputs reached by the plant within these datasets, we define the output constraint set as $\mathbb{Y}=[-1.645,2.071]$. We note that the system is nonlinear, indicated by the fact that the best identified LTI system with $n_x=4$ achieves BFR scores of $77.131$, $70.812$ and $70.206$ on the training, observer and test datasets respectively.
	\begin{figure}
		\centering
		\includegraphics[width=1.\linewidth, clip=true]{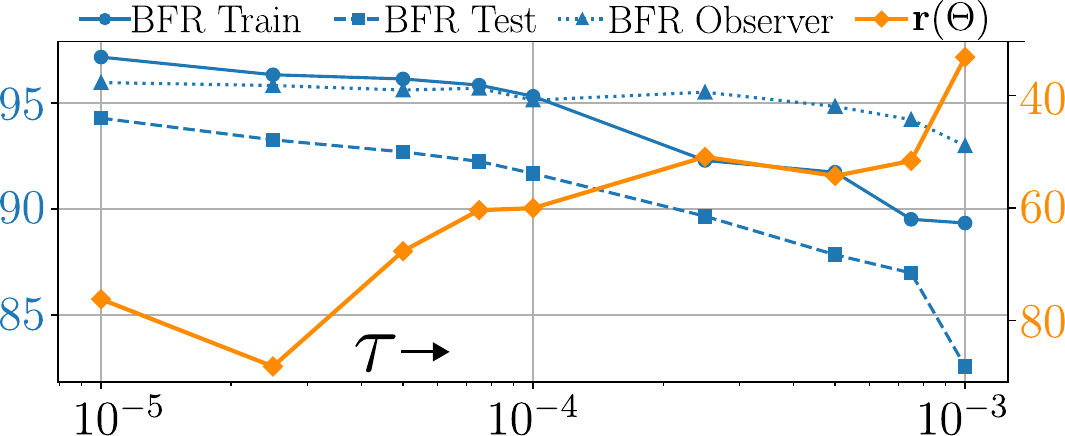}
		\captionsetup{width=\linewidth,justification=justified, singlelinecheck=false}
		\caption{Variation of BFR and RCI set size $\mathbf{r}(\mathbf{x})$ with $\tau$. Observe a tradeoff as $\tau$ varies. We select $\tau=10^{-4}$ for controller synthesis.}
	\label{fig:tau_effect}
\end{figure}
For this system, we aim to design an output reference tracking controller developed using a qLPV model with reduced state dimension $n_x=3$. To this end, we perform the steps outlined in Algorithm \ref{alg:procedure}. We first identify that $n_p=3$ is a suitable model order. Then, we parameterize each FNN $\mathcal{N}_i(x,u)$ in \eqref{eq:ss-p} with a single hidden layer and $4$ $\mathrm{swish}$ activation units. To identify an initial model $\Theta^1$, we solve Problem \eqref{eq:init_learning_problem} with relaxed constraints and $\kappa=1.1$, in which we select $\tilde{F}=[I_3 \ \ -I_3]^{\top}$. In this problem, we set the observer gains $K_i=0$, such that the disturbance set is identified purely on the output. While larger values of $\kappa$ might be selected to increase robustness, $\kappa=1.1$ was found sufficient for our simulations. Then, we solve Problem \eqref{eq:NLP_init} using the IPOPT solver \cite{Wachter2006} interfaced through CasADi \cite{CasADi}, in which we set $M=50$, and formulate it with two reference trajectories $y^{\mathrm{r}}_k=2.071$ and $y^{\mathrm{r}}_k=-1.645$ for all $k \in \mathbb{I}_0^M$ $-$ We compute an RCI set in which our system can be driven from the origin to the vertices of $\mathbb{Y}$. 
The resulting matrix $F$ that parameterizes the RCI set $\mathbb{X}(q)$ is given as $F=\tilde{F} \Sigma^{-1}$, where
\begin{align*}
	\Sigma^{-1} = \begin{bmatrix*}[c]
		0.3098 & 2.7690 & 3.0873 \\
		0.1937 & 0.5118 & -0.1277 \\
		-0.9419 & 0.3280 & 0.6171
	\end{bmatrix*}
\end{align*}
This results in $\mathbb{X}(q)$ having $\mathbf{f}=6$ facets and $\ve=8$ vertices. For $\mathbb{X}(1)$, we compute the matrices $E$ and $V$ following the procedure in \cite{Villaneuva2024}, using which we formulate Problem \eqref{eq:main_learning_problem_penalized} with penalty parameter $\tau_c = 1000$. This penalty was found to be large enough for constraint satisfaction. In Figure \ref{fig:tau_effect}, we plot the variation in BFR scores, along with the value of the regularization function $\mathbf{r}(\Theta)$ at the solution of Problem \eqref{eq:main_learning_problem_penalized} for different values of $\tau$, with smaller values of $\mathbf{r}(\Theta)$ indicating a less conservative RCI set, as stated in \eqref{eq:r_reqmts}. In this plot, BFR Observer refers to the BFR score for the closed-loop observer model \eqref{eq:closed_loop_observer} over the dataset $\mathcal{D}^{\mathrm{w}}$. 
We note that as $\tau$ increases, conservativeness of the RCI set reduces at the expense of model quality. For $\tau\geq 10^{-4}$ , we notice that the BFR observer score sustains at a high value. This is because the model tends to compute large values of the observer gains $\{K_1,K_2,K_3\}$ to minimize the value of $\epsilon_{\mathrm{w}}$ defined in \eqref{eq:W_parameters_unique} in order to reduce the disturbance set size. This, however comes at the price of open-loop prediction inaccuracy as seen by the degradation of BFR Train and BFR Test scores . {\color{black}For comparison, we report that the \textit{sequential design} procedure, which involves first identifying a qLPV model by solving Problem \eqref{eq:SysID_qLPV_examples}, and then computing an RCI set, results in BFR scores of $98.175$, $88.686$ and $94.002$ over the training, test and Observer datasets, with a high conservativeness value of $\mathbf{r}(\Theta)=349.977$. This result validates the usage of concurrent synthesis to solve Problem \ref{prob:basic_problem} with reduced conservativeness.}

Using the model identified with $\tau=10^{-4}$, we then design an output feedback controller for the underlying plant. Given the current plant output $y_t$, observer state $z_t$, and output reference $y_t^{\mathrm{r}}$, we formulate our controller as
\begin{align}
	\label{eq:feedback_controller}
	(u_t^*,q_t^*)&:= \arg\min_{u,q} \|y_t^{\mathrm{r}}-Cz^+\|_2^2 \\
	& \ \ \ \ \ \ \text{s.t.} \begin{cases} \nonumber
		z^+=A(p)z_t+B(p)u+K(p)(y_t-Cz_t), \\[1ex]
		p=p(z_t,u), \ u \in \mathbb{U}, \\[1ex]
		q \in \mathbb{Q}(\Theta), \ z^+ \in \mathbb{X}(q),
	\end{cases}
\end{align}
where $\mathbb{Q}(\Theta)$ is defined in \eqref{eq:RCI_set_parameters}. In Problem \eqref{eq:feedback_controller}, we seek to find a feasible control input $u$ and an RCI set $\mathbb{X}(q)$ in which the subsequent state belongs, while minimizing the distance between subsequent output and the reference. 
Recall that our model $\Theta$ is constructed such that $\mathbb{Q}(\Theta)$ is nonempty. Furthermore, under Assumption \ref{ass:kappa_assumption}, we always have $y_t-Cz_t \in \mathbb{W}_{\kappa}(\mathrm{c}_{\mathrm{w}},\epsilon_{\mathrm{w}})$. Then, according to Proposition \ref{prop:RCI_exists}, if Problem \eqref{eq:feedback_controller} is initialized with $\hat{x}_0-z_0 \in \mathcal{E}$ (where $\hat{x}_0$ is the state of the underlying plant \eqref{eq:underlying_fake}), then the closed-loop system of the observer \eqref{eq:closed_loop_observer} and the plant is recursively feasible with $u_t=u_t^*$. Since the scheduling variable depends on $u$, Problem \eqref{eq:feedback_controller} is a nonlinear optimization problem. By parameterizing the FNNs $\mathcal{N}_i(z,u)$ to be independent of $u$, we can obtain a QP formulation instead. Note that since $\mathbb{Q}(q)$ is nonempty, one can instead design tube-based MPC controllers, e.g., \cite{Badalamenti2024} that are also based on solving a QP online. Such a scheme would be endowed with stability guarantees. Alternatively, by replacing the objective with $\|u^{\mathrm{r}}_t-u\|_2^2$ where $u^{\mathrm{r}}_t$ is a reference input signal, safety-filter type schemes \cite{Wabersich2021} can be synthesized.

In Figure \ref{fig:control_figure}, we plot the closed-loop trajectories with $u_t=u_t^*$, where the reference is composed of piecewise constant signals, and both the observer and plant are initialized at the origin $-$ Since the structure of the plant is known, it can reasonably be assumed that $\hat{x}_0-z_0 \in \mathcal{E}$ is verified. We observe that the plant output tracks the reference while being included in the output constraint set $\mathbb{Y}$. We observe further that $y_t$ and $Cz_t$, where $z_t$ is the observer state, are very close to each other. This indicates that our choice of disturbance set satisfies Assumptions \ref{ass:W_assumption}, such that the controller is able to perform safe tracking following Proposition \ref{prop:RCI_exists}. In Figure \ref{fig:RCI_figure}, we plot the set $\hat{\mathbb{X}}$ computed as the convex hull of the sets $\mathbb{X}(q_t^*)$ computed by Problem \eqref{eq:feedback_controller} over simulation length of $20$s.
We report that $z_t \in \mathbb{X}(q^*_t)$ always holds resulting in $z_t \in \hat{\mathbb{X}}$, demonstrating recursive feasibility. 

\begin{figure}
	\centering
	\includegraphics[width=1.\linewidth, clip=true]{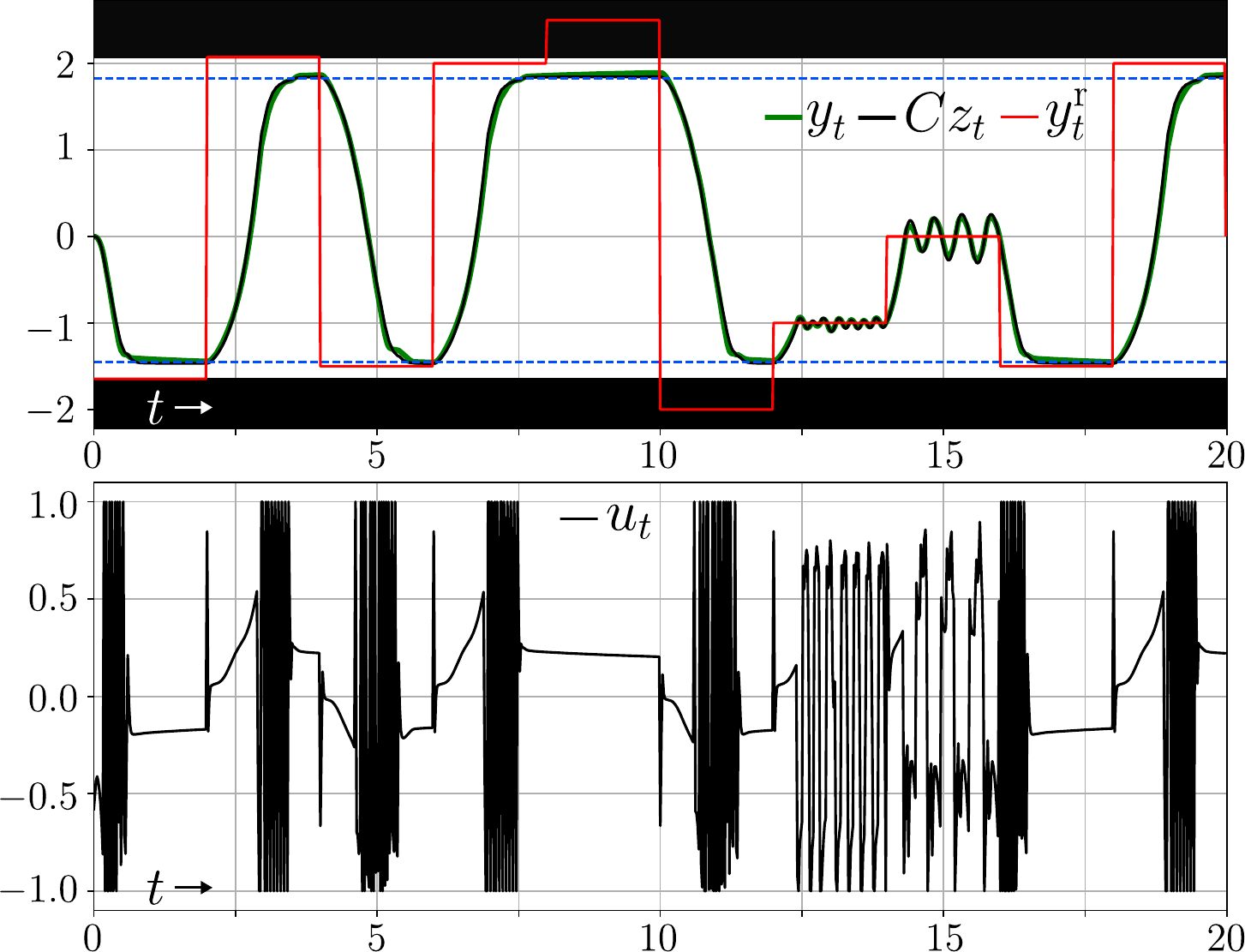}
	\captionsetup{width=\linewidth,justification=justified, singlelinecheck=false}
	\caption{(Top) Closed-loop trajectories of plant and observer in the output space. Observe that the plant satisfies constraints, indicated by the black regions. The black line indicates boundaries of the tightened output constraint set $\mathbb{Y} \ominus \mathbb{W}_{\kappa}(\mathrm{c}_{\mathrm{w}},\epsilon_{\mathrm{w}})$; (Bottom) Corresponding input trajectories generated by \eqref{eq:feedback_controller}.}
	\label{fig:control_figure}
\end{figure}

\begin{figure}
	\centering
	\includegraphics[width=1.\linewidth, clip=true]{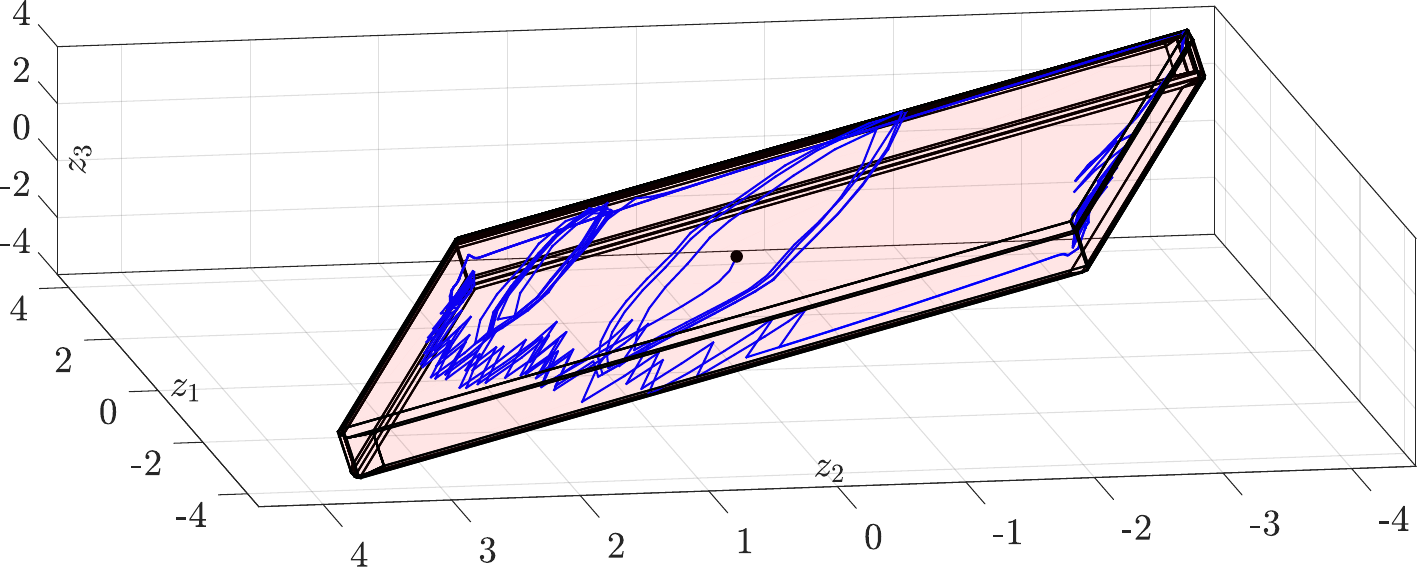}
	\captionsetup{width=\linewidth,justification=justified, singlelinecheck=false}
	\caption{State trajectories of the closed-loop observer \eqref{eq:closed_loop_observer}, along with the set $\hat{\mathbb{X}}$ to demonstrate recursive feasibility of Problem \eqref{eq:feedback_controller}.}
	\label{fig:RCI_figure}
\end{figure}

\section{Conclusions}
By properly defining a control-oriented regularization term into the system identification problem,
we presented an approach to guarantee that the identified model is suitable for designing constrained controllers 
for the plant generating the data. An uncertain model was derived using a state observer, and the regularization function was characterized as the size of the largest RCI set for the uncertain model. By parameterizing the model as a qLPV model and representing the RCI set as a configuration-constrained polytope, we transformed the identification problem into a computable form. The initialization strategy, which includes estimating suitable number of scheduling signals, enables solving the problem
to good-quality solutions as numerically demonstrated.

The results of the paper can be extended in several directions, such as: (a) reduce conservativeness in the RCI set by limiting the multiplicative uncertainty encountered when the system evolves within the set; (b) develop alternative formulations of $\mathbf{r}(\Theta)$; (c) design tailored optimization algorithms; and (d) integrate the scheme with an active learning framework for data-efficient control-oriented model identification.

\bibliographystyle{ieeetr} 
\bibliography{manuscript_references} 

\end{document}